%% file: main.tex
\documentclass[a4paper,11pt,onecolumn,nopdfoutputerror]{quantumarticle}
\usepackage[utf8]{inputenc}
\usepackage[english]{babel}
\usepackage[T1]{fontenc}
\usepackage{amsmath,amsthm,amsfonts,amssymb}
\usepackage{hyperref}
\usepackage{xurl}
\usepackage{braket}
\usepackage{booktabs}
\usepackage{graphicx}
\usepackage{tikz}
\usetikzlibrary{shapes.misc}
\usepackage[numbers,sort&compress]{natbib}

\newtheorem{theorem}{Theorem}
\newtheorem{lemma}[theorem]{Lemma}
\newtheorem{proposition}[theorem]{Proposition}
\newtheorem{definition}[theorem]{Definition}

\theoremstyle{remark}

\newtheorem{example}[theorem]{Example}

\newcommand{\F}{\mathbb F}

\newcommand{\Zn}{\mathbb Z_n}
\newcommand{\wt}{\operatorname{wt}}
\newcommand{\one}{\mathbf 1}

\begin{document}

\title{Short qudit triorthogonal codes from doubling}

\author{Shiroman Prakash}
\affiliation{Department of Physics and Computer Science, Dayalbagh Educational Institute, Agra-282005, India}
\email{sprakash@dei.ac.in}
\author{K. L. Ramya}
\affiliation{Department of Physics and Computer Science, Dayalbagh Educational Institute, Agra-282005, India}
\email{ramya144401@dei.ac.in}

\maketitle

\begin{abstract}
We construct short triorthogonal codes for qudits of odd prime dimension by generalizing
doubling, which combines a triorthogonal code with a code admitting a transversal phase gate, and raises the distance by up to two.
This gives codes with parameters $[[28,1,4]]_3$, $[[37,1,5]]_3$, $[[72,1,6]]_3$,
$[[118,1,8]]_3$, $[[33,1,5]]_5$ and $[[20,1,4]]_7$, each shorter than the cyclic, concatenated
and Reed--Muller codes of the same distance, and a doubling construction of the
$[[15,1,3]]_3$ code. When the columns of a generator matrix are regarded as charged points,
triorthogonality becomes the vanishing of multipole moments up to octupole order, and doubling
becomes a double layer of opposite charges; this picture explains why inputs with nontrivial
coefficients can give shorter codes. We also show by exhaustive search that the shortest codes
of distance three have length $11$ for $p=5$ and length $7$ for $p=7$. Combined with high-rate
distance-two codes, the new codes reduce the cost of qutrit magic state distillation by up to
an order of magnitude in regimes with high input noise.
\end{abstract}

% =============================================================================
\section{Introduction}\label{sec:intro}
% =============================================================================

Triorthogonal codes~\cite{Bravyi_2012} are CSS codes on which a transversal third-level
diagonal gate acts as a non-Clifford logical gate, and they underlie low-overhead magic state
distillation~\cite{Bravyi_2012,Haah_2018,Hastings_2018}. For qubits, the most effective way to build
such codes with growing distance is \emph{doubling}. Introduced by Betsumiya and
Munemasa~\cite{Betsumiya_2012} as a map from doubly-even to triply-even classical codes, its quantum
version combines a code with transversal Clifford gates and a code with a transversal $T$ gate into
a longer $T$-code of larger distance. The $[[15,1,3]]$ and $[[49,1,5]]$ codes arise in this
way~\cite{BravyiCross2015}, as does the $[[95,1,7]]$ code~\cite{Sullivan2024}, and Jain and Albert
used doubling systematically to obtain the shortest known codes at larger
distances~\cite{Jain_2025}.

Magic state distillation extends to qudits of odd prime
dimension~\cite{ACB,CampbellAnwarBrowne,campbell2014enhanced,Prakash:2024ghq}, but small triorthogonal codes are
scarce in this setting; the standard examples are punctured
Reed--Muller~\cite{CampbellAnwarBrowne,campbell2014enhanced,Saha:2025frb} and
Reed--Solomon~\cite{Krishna_2019} codes, the distance-two codes of~\cite{Prakash:2025azi}, and the
cyclic codes of~\cite{PrakashCyclic}. Other recent work builds binary CSS-T and triorthogonal
codes from cyclic, matrix-product and self-dual
codes~\cite{CampsMoreno2024,Camps-Moreno:2024qcn,Berardini2024,JaramilloVelez2026,Baldelli2026,ShiLuKimSole2024}.
Asymptotically good codes with transversal non-Clifford
gates~\cite{wills2024constantoverheadmagicstatedistillation,golowich2024asymptoticallygoodquantumcodes,nguyen2024goodbinaryquantumcodes,SanJose2026},
and distillation with asymptotically good qudit codes~\cite{Cervia2025}, target large lengths,
whereas we focus on small codes.

Prakash and Saha noted that no small qutrit triorthogonal code of distance three, analogous to
the 15-qubit code, was known~\cite{Prakash:2025azi}. Such a code, $[[15,1,3]]_3$, is presented
in~\cite{prakashGeneration}, where it is obtained from one generation of the Standard Model,
described as a doubled code, and shown to be the unique shortest qutrit triorthogonal code of
distance at least three. Here it appears as the simplest case of the general doubling
construction and in the early rounds of qutrit distillation sequences. It is not self-orthogonal, so it is triorthogonal only under the definition
of~\cite{PrakashCyclic,SahaNotes}, which drops that requirement; we use this definition throughout.

Doubling extends to qudits once the self-dual input of qubit doubling is replaced by a
\emph{$P$-code}: a self-orthogonal code, orthogonal to the all-ones vector, which therefore
admits a transversal phase gate $P$ (Section~\ref{sec:pcodes}). The phase gate, rather than the
qudit Hadamard, is the relevant Clifford gate because the symmetry of the magic state we
distill is a phase-type Clifford (Section~\ref{sec:magic}). Doubling a triorthogonal seed $[[n_T,1,d_T]]_p$, which need
not be self-orthogonal, with a $P$-code $[[n_P,1,d_P]]^P_p$ gives a
$[[2n_P+n_T,1,\min(d_P,d_T+2)]]_p$ code (Theorem~\ref{thm:doubling}), and among tensor-product
constructions doubling is the most economical (Appendix~\ref{app:blowup}).

Section~\ref{sec:geometry} explains why doubling works. Each column of a generator matrix is a
point carrying a charge in $\F_p$, triorthogonality is the vanishing of multipole moments up to
octupole order, and doubling places two copies of the $P$-code in layers of opposite charge,
which raises that order by one. Coinciding points can be merged into one point carrying the sum of their charges, which leads to weighted $P$-codes. For $p=3$ and $p=5$ at distance three and $p=7$ at distance four these give the shortest codes, including a doubling construction of $[[15,1,3]]_3$.

Doubling, iterated where useful, gives codes of distance up to $8$ that are shorter than the
shortest codes of the same distance in~\cite{PrakashCyclic} (Section~\ref{sec:results} and
Table~\ref{tab:best}); their generator matrices are in Appendix~\ref{app:matrices}. An exhaustive search (Section~\ref{sec:search},
Appendix~\ref{app:search}) shows that the shortest codes of distance three have length $11$ for
$p=5$ and $7$ for $p=7$, complementing length $15$ for $p=3$~\cite{prakashGeneration}, and that
no code of distance four for $p=5$ is shorter than $15$. In qutrit distillation
(Section~\ref{sec:sequences}), using the new codes in the first rounds lowers the cost by up to
an order of magnitude at high input noise. Doubling does not help everywhere: at low input noise
the optimal sequences do not use the new codes, and for $p=5$, for $p=7$, and for $p=3$ at
distance $7$, several shortest known codes remain cyclic or concatenated (Table~\ref{tab:best}).

% =============================================================================
\section{Preliminaries}\label{sec:prelim}
% =============================================================================

Throughout, $p$ is an odd prime, $\omega=e^{2\pi i/p}$, and $\one$ is an all-ones vector. For
$a,b\in\F_p^n$ let $\beta(a,b)=\sum_la_lb_l$ and, for coefficients $\mathbf c\in(\F_p^\times)^n$,
$\tau_{\mathbf c}(a,b,e)=\sum_l\mathbf c_la_lb_le_l$.

\subsection{Qudit magic states}\label{sec:magic}

For $p\ge5$, we consider the non-Clifford cubic gate
$U=\sum_j\omega^{j^3}\ket j\bra j$, and for $p=3$ the gate $U=\operatorname{diag}(1,\zeta,\zeta^{-1})$
with $\zeta=e^{2\pi i/9}$~\cite{HowardVala}.  Injecting $U$ consumes one copy of the magic state
\[
\ket M=U\ket+,\qquad \ket+=p^{-1/2}\sum_{j=0}^{p-1}\ket j.
\]
This is the state qudit triorthogonal codes are designed to distill.

The gate $U$ lies in the third level of the Clifford
hierarchy~\cite{Cui}. Conjugating the Pauli operator $X$ by $U$ gives
$UXU^\dagger\ket j=\omega^{3j^2+3j+1}\ket{j+1}$, which is of the form
\[
UXU^\dagger=XD\qquad\text{with $D$ a diagonal Clifford gate}.
\]
For $p=3$, this also holds with $D=\zeta\operatorname{diag}(\omega^{j^2+j})$. Because $X\ket+=\ket+$, the magic state is
an eigenstate of the Clifford $XD$~\cite{jain2020qutrit}, and $Z^k\ket M$ are the other eigenstates, with
distinct eigenvalues. 

Averaging a noisy copy over the powers of $XD$ makes it diagonal in the
basis $Z^k\ket M$, so that the noise becomes a random $Z^k$ error~\cite{ACB,CampbellAnwarBrowne}. 
This is why only the $Z$-distance is relevant for distillation. For simplicity, we assume $Z^k$ is equally likely for each $k \neq 0$ in what follows, and describe the noise by its
infidelity $\epsilon$, the total probability of a nontrivial error $Z^k$. An alternative noise
parameter is the depolarizing strength $\delta$: depolarizing noise replaces the state by the
maximally mixed state with probability $\delta$, which gives $\epsilon=(p-1)\delta/p$. Neither noise parameter is canonical
for comparison across qudits of different dimensions. We use $\epsilon$ for the properties of individual codes, where it is the natural
variable, and $\delta$ when comparing qubits with qudits, following~\cite{campbell2014enhanced,Prakash:2025azi}. At equal
$\delta$ a qudit has the larger error probability, so an advantage of qudits that survives at equal
$\delta$ is more robust.

For qubits, $T\ket+$ is Clifford-equivalent to
an eigenstate of the Hadamard gate~\cite{MSD}, so the same magic state can be distilled with codes
that have a transversal Hadamard, and the self-dual codes used in qubit doubling have one. For
qudits of dimension $p>2$, this coincidence no longer holds. The eigenstates of the Fourier gate, the qudit Hadamard, are
Clifford-inequivalent to $\ket M$~\cite{jain2020qutrit}, and codes built around them, such as the
ternary Golay and quantum quadratic-residue codes, distill different Clifford
eigenstates~\cite{2020golay,Zurel2026}. The symmetry of $\ket M$ is instead the phase-type Clifford
$XD$, so the natural input for doubling is a code with a transversal \emph{phase} gate
$P=\sum_j\omega^{j^2}\ket j\bra j$ (Section~\ref{sec:pcodes}), whose qubit analogue is the gate $\operatorname{diag}(1,i)$. Our codes implement $U^\lambda$ rather than $U$. When $\lambda$ is a cube (as it is for every code in
this paper), $U^\lambda\ket+$ is Clifford-equivalent to $\ket M$~\cite{PrakashCyclic,HowardVala,Bengtsson_2014}.

\subsection{Triorthogonal matrices}

We follow the conventions used by one of us earlier~\cite{PrakashCyclic}. A matrix $H\in\F_p^{m\times n}$ with independent rows, the first
of which is the logical row $y$ and the rest of which span the stabilizer space $\mathcal A$,
defines a CSS code with $X$-stabilizers $X(\mathcal A)$ and $Z$-stabilizers
$Z(\mathcal H^\perp)$, where $\mathcal H$ is the row space. Its distance is the $Z$-distance
$d=\min\{\wt(v):v\in\mathcal A^\perp\setminus\mathcal H^\perp\}$, which by
Section~\ref{sec:magic} is the one relevant for distillation, and which for triorthogonal codes is
also the distance of the code~\cite[Lemma~6]{PrakashCyclic}. The vectors of
$\mathcal A^\perp\setminus\mathcal H^\perp$ are the nontrivial \emph{$Z$-logicals}.

\begin{definition}[\cite{PrakashCyclic}]\label{def:tri}
The pair $(H,\mathbf c)$, with logical row $y$, is \emph{triorthogonal} if
\begin{enumerate}
\item $\tau_{\mathbf c}(h^\alpha,h^\beta,h^\gamma)=0$ for all rows $h^\alpha,h^\beta,h^\gamma$, not
necessarily distinct, except $h^\alpha=h^\beta=h^\gamma=y$; and
\item $\lambda:=\tau_{\mathbf c}(y,y,y)\neq0$.
\end{enumerate}
A subspace on which $\tau_{\mathbf c}$ vanishes identically is a \emph{triorthogonal space}.
\end{definition}

The transversal gate $\bigotimes_lU^{\mathbf c_l}$ then implements $U^\lambda$, exactly for
$p\ge5$ and up to Pauli corrections for $p=3$~\cite{PrakashCyclic}.  We write a triorthogonal code as $T=(y;T_0)$, call it a \emph{seed}, and write $\lambda_T$
for its logical exponent. The trivial seed $[[1,1,1]]_p$ has $y=(1)$ and $T_0=0$. Note that, unlike earlier work~\cite{Krishna_2019}, no self-orthogonality is required~\cite{PrakashCyclic, SahaNotes} for a transversal non-Clifford gate.

\subsection{$P$-codes}\label{sec:pcodes}

The doubling procedure needs a triorthogonal seed, and a code on which the phase gate $P$ acts transversally. The
simplest such codes are the following. For qudits of $p>2$, doubling also allows a weighted generalization that appears in
Section~\ref{sec:masses}.

\begin{definition}\label{def:pcode}
A \emph{$P$-code} of length $n$ is a subspace $S\subseteq\F_p^n$ such that
\[
S\subseteq S^\perp,\qquad S\subseteq\one^\perp,\qquad n\not\equiv0\pmod p .
\]
It defines the CSS code with $X$-stabilizers $X(S)$, $Z$-stabilizers $Z\big((S+\F_p\one)^\perp\big)$
and logical $\bar X=X(\one)$, with distance
\[
d_P=\min\{\wt(v):v\in S^\perp,\ \textstyle\sum_lv_l\neq0\}.
\]
We write $[[n,1,d_P]]_p^{P}$ for its parameters.
\end{definition}

\begin{lemma}\label{lem:P}
For a $P$-code, $P^{\otimes n}$ implements the logical gate $P^n$.
\end{lemma}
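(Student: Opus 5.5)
The plan is to write down a basis of logical states and check directly that $P^{\otimes n}$ multiplies each one by the phase that $P^n$ would apply. The $X$-stabilizer group is $X(S)$. The logical operator is $\bar X=X(\one)$. The $Z$-stabilizers are $Z((S+\F_p\one)^\perp)$, and these fix every computational basis state $\ket{s+j\one}$ with $s\in S$. Since $n\not\equiv0\pmod p$, the vector $\one$ is not in $S$, because $\beta(\one,\one)=n\neq0$ while every element of $S$ lies in $\one^\perp$. So the code space has dimension $p$, with orthonormal logical basis
\[
\ket{\bar j}=|S|^{-1/2}\sum_{s\in S}\ket{s+j\one},\qquad j\in\F_p,
\]
on which $\bar X$ acts as $\ket{\bar j}\mapsto\ket{\overline{j+1}}$. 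The corresponding logical $\bar Z$ can be taken as $Z(u)$ for any $u\in S^\perp$ with $\beta(u,\one)=1$. I will check that this $Z(u)$ acts as $\omega^{j}$ on $\ket{\bar j}$.

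The key computation is the phase that $P^{\otimes n}$ puts on a basis vector $\ket{x}$, namely $\omega^{\beta(x,x)}$. Take $x=s+j\one$ and expand:
\[
\beta(x,x)=\beta(s,s)+2j\,\beta(s,\one)+j^2\beta(\one,\one).
\]
The first term vanishes because $S\subseteq S^\perp$, so $\beta(s,s)=0$. The second vanishes because $S\subseteq\one^\perp$. The last equals $nj^2$. Therefore $P^{\otimes n}\ket{\bar j}=\omega^{nj^2}\ket{\bar j}$, which is exactly the action of $P^n$ on $\ket{\bar j}$. This shows the code space is preserved and identifies the logical gate as $P^n$, with no global phase.

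There is no real obstacle; the proof is a direct calculation. The only points needing care are these:
\begin{enumerate}
\item Because $p$ is odd, only $\beta(s,s)=0$ and $\beta(s,\one)=0$ are needed, and not any finer evenness condition of the kind required for qubits.
\item The condition $n\not\equiv0$ is what makes $\ket{\bar j}$ a well-defined $p$-dimensional logical basis.
\item In the case $p=3$, $P$ is still $\operatorname{diag}(\omega^{j^2})$, so no correction terms arise.
\end{enumerate}
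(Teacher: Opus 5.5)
Your proposal is correct and follows the paper's proof: expand $\beta(j\one+s,j\one+s)=nj^2+2j\sum_l s_l+\beta(s,s)$ and use $S\subseteq\one^\perp$ and $S\subseteq S^\perp$ to get $nj^2$ on every code state $\ket{\bar j}$. Your extra steps — showing $\one\notin S$, checking that the code space is $p$-dimensional, and identifying $\bar Z=Z(u)$ — are all valid. They spell out what the paper's one-line proof takes for granted.
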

\begin{proof}
A code state is a uniform superposition of $\ket{x\one+s}$ with $s\in S$, and
$\beta(x\one+s,x\one+s)=x^2n+2x\sum_ls_l+\beta(s,s)=x^2n$.
\end{proof}

$P$-codes play the role of the inputs of qubit doubling. For $p=2$ the formula for $P$ gives
$Z$, so for qubits we take instead $P=\operatorname{diag}(1,i)$, the $S$ gate. The computation
in the proof of Lemma~\ref{lem:P}, now with the phase $i^{\wt(x\one+s)}$ and weights counted
modulo $4$, shows that $P^{\otimes n}$ implements $P^n$ when $n$ is odd and every vector of $S$
has weight divisible by $4$. These doubly even codes of odd length are the inputs of qubit
doubling~\cite{Jain_2025}. When $\dim S=(n-1)/2$, as for the Steane code,
$(S+\F_2\one)^\perp=S$, so the $X$- and $Z$-stabilizers coincide and the code also has a
transversal Hadamard, which is why qubit distillation is often phrased in terms of Hadamard
eigenstates~\cite{knill2004faulttolerant,Reichardt2005}. For $p>2$, doubling uses only the
conditions of Definition~\ref{def:pcode}, and the $P$-codes need not have dimension $(n-1)/2$.

The condition $n\not\equiv0 \pmod p$
makes the logical gate nontrivial, and is needed to solve~\eqref{eq:kappa} below. Since $S$ is
totally isotropic in the nondegenerate space $\one^\perp$ of dimension $n-1$, a $P$-code has
$\dim S\le(n-1)/2$, that is, $n\ge2\dim S+1$. The $P$-codes we use, from cyclic and
quadratic-residue codes and from a search, are collected in Section~\ref{sec:inputs}.

% =============================================================================
\section{Doubling}\label{sec:doubling}
% =============================================================================

The construction takes two copies of a $P$-code with opposite signs, adds a \emph{linking row}
that ties the second copy to the logical row of the seed, and appends the seed. The logical row
runs over all three blocks.

\begin{theorem}[Doubling]\label{thm:doubling}
Let $S$ be a $P$-code of length $n_P$ with distance $d_P$, and let $T=(y;T_0)$ be a seed of
length $n_T$ with distance $d_T$. Let $\eta,\kappa\in\F_p^\times$ satisfy
\begin{equation}\label{eq:kappa}
\eta\,n_P\kappa^3+\lambda_T=0 .
\end{equation}
Then the matrix
\[
H=\left(\begin{array}{ccc}
-\kappa\one & \kappa\one & y\\ \hline
s & -s & 0\\
0 & \kappa\one & y\\
0 & 0 & T_0
\end{array}\right)\qquad(s\text{ in a basis of }S),
\]
with coefficient $\eta$ on the first $2n_P$ coordinates and the seed's own coefficients on the last $n_T$, is triorthogonal with
$\lambda=\lambda_T$, and defines a $[[\,2n_P+n_T,\ 1,\ \min(d_P,\,d_T+2)\,]]_p$ code. For $p=3$ or
$p\equiv2\pmod3$ one can take $\eta=1$, since cubing is then a bijection of $\F_p^\times$; for $p\equiv1\pmod3$ one can
take $\kappa=1$ and $\eta=-\lambda_T/n_P$, so that the doubled code carries nontrivial coefficients.
\end{theorem}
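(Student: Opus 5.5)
The plan is to check Definition~\ref{def:tri} directly, one class of row triples at a time, and then to bound the $Z$-distance by splitting according to how a $Z$-logical restricts to the seed block. Write the rows of $H$ as $y'=(-\kappa\one,\kappa\one,y)$, $s'=(s,-s,0)$, $\ell=(0,\kappa\one,y)$ and $t'=(0,0,t)$ with $t$ a row of $T_0$. Since $\tau$ is a sum over coordinates, it splits into three block contributions: $\eta\sum$ over block 1, $\eta\sum$ over block 2, and $\tau_{\mathbf c_T}$ over block 3.

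\emph{Triorthogonality.}
\begin{enumerate}
\item Diagonal term: $\tau(y',y',y')=\eta(-\kappa^3n_P+\kappa^3n_P)+\lambda_T=\lambda_T\neq0$.
\item Triples containing some $t'$: blocks 1 and 2 contribute nothing, because $t'$ vanishes there. The block-3 term is a seed triple built from $y$ and rows of $T_0$, not all equal to $y$, so it vanishes by triorthogonality of $T$.
\item Triples of $s'$-rows only: the block-1 and block-2 cubic sums are $\sum s s' s''$ and $\sum(-s)(-s')(-s'')$, which cancel. This odd-degree sign cancellation is the mechanism behind the construction.
\item Two $s'$-rows with one of $y',\ell$: the contribution is a multiple of $\kappa\beta(s,s'')$, which is zero since $S\subseteq S^\perp$.
\item One $s'$-row with two of $y',\ell$: the contribution is a multiple of $\kappa^2\sum_l s_l$, which is zero since $S\subseteq\one^\perp$.
\item Triples from $\{y',\ell\}$ containing at least one $\ell$: block 1 of $\ell$ is zero, so every such triple gives $\eta\kappa^3n_P+\lambda_T$. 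This vanishes exactly by \eqref{eq:kappa}.
\end{enumerate}
Linear independence of the rows is immediate from the block structure. The basis of $S$ is independent, $\ell$ and $y'$ are separated by block 1 and by $\kappa\neq0$, and the rows of $T$ are independent. The remarks on $\eta,\kappa$ are solvability of \eqref{eq:kappa}, using $n_P\not\equiv0$. When cubing is bijective on $\F_p^\times$ we solve for $\kappa$ with $\eta=1$; otherwise we set $\kappa=1$ and $\eta=-\lambda_T/n_P$.

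\emph{Distance.} Let $v=(a,b,c)$ be a nontrivial $Z$-logical, so that $v\in\mathcal A^\perp$ and $\beta(v,y')\neq0$. The condition $v\in\mathcal A^\perp$ gives three facts:
\begin{enumerate}
\item $a-b\in S^\perp$;
\item $\kappa\sum b_l+\beta(c,y)=0$;
\item $c\perp T_0$.
\end{enumerate}
Then $\beta(v,y')=-\kappa\sum a_l+(\kappa\sum b_l+\beta(c,y))=-\kappa\sum a_l$, so $\sum a_l\neq0$.

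The argument now splits into two cases.
\begin{enumerate}
\item If $\beta(c,y)\neq0$, then $c$ is a nontrivial $Z$-logical of the seed, so $\wt(c)\ge d_T$. Also $a\neq0$, and $\sum b_l=-\beta(c,y)/\kappa\neq0$ forces $b\neq0$. Hence $\wt(v)\ge d_T+2$.
\item If $\beta(c,y)=0$, then $\sum b_l=0$, so $a-b\in S^\perp$ with $\sum(a-b)_l\neq0$. Hence $\wt(a)+\wt(b)\ge\wt(a-b)\ge d_P$.
\end{enumerate}
Either way $\wt(v)\ge\min(d_P,d_T+2)$, and by \cite[Lemma~6]{PrakashCyclic} this $Z$-distance is the code distance.

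The main obstacle is not any single computation but the bookkeeping: every mixed triple involving $\ell$ must be shown to collapse to the single constraint \eqref{eq:kappa}. The distance step also needs care in the choice of case split, namely whether $\beta(c,y)$ vanishes. The statement only asks for a lower bound, so no matching upper-bound construction is needed.
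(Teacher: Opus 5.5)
Your triorthogonality check and your lower bound on the distance are correct, and they follow the paper's proof closely. You use the same classes of row triples; where the paper treats all triples without $\ell$ or $t'$ at once, using the $(x,-x)$ form of the first $2n_P$ coordinates, you split them into four cases. You also use the same case split on whether $\beta(c,y)$ vanishes.

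The gap is your final claim that the statement asks only for a lower bound. The theorem says the code has parameters $[[2n_P+n_T,1,\min(d_P,d_T+2)]]_p$, meaning its distance \emph{equals} $\min(d_P,d_T+2)$. The paper's proof ends by showing that both bounds are attained, and the exact value is used later, for example in the count of minimum-weight logicals of the doubled codes. So you still need to exhibit nontrivial $Z$-logicals of weights $d_P$ and $d_T+2$. Let $e_1$ be the first standard basis vector of $\F_p^{n_P}$.

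\emph{Weight $d_P$.} Take $w\in S^\perp$ with $\sum_lw_l\neq0$ and $\wt(w)=d_P$, and set $v=(w,0,0)$. Your three conditions hold, since $a-b=w\in S^\perp$, $\sum_l b_l=0$ and $c=0$. Also $\sum_la_l\neq0$, so $v$ is a nontrivial $Z$-logical of weight $d_P$.

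\emph{Weight $d_T+2$.} Take $z\in T_0^\perp$ with $\beta(z,y)\neq0$ and $\wt(z)=d_T$. Put $\alpha=-\beta(z,y)/\kappa\neq0$ and $v=(\alpha e_1,\alpha e_1,z)$. Then $a-b=0$, $\kappa\sum_lb_l+\beta(z,y)=0$, $z\perp T_0$ and $\sum_la_l=\alpha\neq0$, so $v$ is a nontrivial $Z$-logical of weight $d_T+2$.

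Together with your lower bound, these two vectors give the distance exactly.
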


\begin{proof}
\emph{Independence.} In a vanishing linear combination of the rows, the first block forces the
coefficients of the logical row and of the rows $(s,-s,0)$ to vanish, since $\one\notin S$
(because $\sum_l1=n_P\neq0$ while $S\perp\one$); the second block then forces the coefficient of
the linking row to vanish, and the third block those of the rows $(0,0,t)$.

\emph{Triorthogonality.} Write $L$ for the
logical row and $K=(0,\kappa\one,y)$ for the linking row, and sort the triples of rows by which
rows they contain.
\begin{itemize}
\item \emph{Triples not containing $K$ or $(0,0,t)$.} On the first $2n_P$ coordinates, $L$ and the rows $(s,-s)$
all have the form $(x,-x)$, and the two halves contribute with opposite signs, so these triples
vanish there; on the seed coordinates only $\tau(L,L,L)=\lambda_T$ survives.
\item \emph{Triples containing $K$ and two rows $(s,-s)$:} $\eta\kappa\,\beta(s,s')=0$, since $S$ is self-orthogonal.
\item \emph{Triples containing $K$, one row $(s,-s)$, and $K$ or $L$:} $-\eta\kappa^2\sum_ls_l=0$, since $S\perp\one$.
\item \emph{Triples containing only $K$ and $L$, and at least one $K$:} $\eta\kappa^3n_P+\lambda_T=0$, by~\eqref{eq:kappa}.
\item \emph{Triples containing at least one row $(0,0,t)$:} only the seed coordinates contribute, where every other row is $0$
or $y$, so these triples vanish because $T$ is triorthogonal.
\end{itemize}

\emph{Distance.} Write a $Z$-type operator as $Z=(z_A,z_B,z_C)$, with blocks of lengths $n_P$,
$n_P$ and $n_T$, and write $\sum z_A$ for the sum of the coordinates of $z_A$. It is a $Z$-logical
when it commutes with all stabilizer rows, which gives one condition for each type of
row:
\begin{itemize}
\item the rows $(s,-s,0)$ give $\beta(s,z_A-z_B)=0$ for all $s\in S$, that is, $z_A-z_B\in S^\perp$;
\item the rows $(0,0,t)$ give $z_C\in T_0^\perp$;
\item the linking row gives $\kappa\sum z_B+y\cdot z_C=0$.
\end{itemize}
It is nontrivial when it does not commute with the logical row, which is a logical $X$, that is, when
\[
-\kappa\sum z_A+\kappa\sum z_B+y\cdot z_C\neq0.
\] By the linking condition, this is equivalent to
$\sum z_A\neq0$.

There are two cases.
\begin{itemize}
    \item If $y\cdot z_C=0$, then $\sum z_B=0$ by the linking condition, so
$z_A-z_B\in S^\perp$ has coordinate sum $\sum z_A\neq0$, and it is a $Z$-logical of the $P$-code. Its
support lies in the union of the supports of $z_A$ and $z_B$, so $\wt(Z)\ge d_P$.
    \item If
$y\cdot z_C\neq0$, then $z_C\in T_0^\perp$ is a $Z$-logical of $T$, so $\wt(z_C)\ge d_T$; moreover
$\sum z_B=-y\cdot z_C/\kappa\neq0$ and $\sum z_A\neq0$, so $z_A$ and $z_B$ are both nonzero and
$\wt(Z)\ge d_T+2$.
\end{itemize}  
Both bounds are attained, by $(v,0,0)$ with $v$ a minimum-weight $Z$-logical of the
$P$-code, and by $(ae_1,ae_1,z_C)$ with $z_C$ a minimum-weight $Z$-logical of $T$ and
$a=-y\cdot z_C/\kappa$.
\end{proof}

The above procedure is a natural generalization of doubling for qudits of dimension $p>2$. Let us note a few details of this construction, relevant mainly for $p>2$.

Note that the relative sign between $(-\kappa\one,\kappa\one)$ in the logical row and $(0,\kappa\one)$ in the
linking row is essential: with the opposite sign, $\tau(L,K,K)=2\lambda_T\neq0$. For $p=3$ the rule is
$\kappa\equiv-\lambda_T/n_P\pmod3$.

Only triorthogonality of the seed is used in the proof, so seeds need not be self-orthogonal; the
cyclic $[[19,1,3]]_3$ and $[[11,1,3]]_5$ codes of~\cite{PrakashCyclic}, for instance, are valid
seeds. Other choices of $(\eta,\kappa)$ than those in the theorem also work, and Appendix~\ref{app:matrices} uses
$\eta=2$, $\kappa=3$ for $[[20,1,4]]_7$.

Doubling is a special case of a tensor-product construction, in which each coordinate of a small matrix is replaced by a copy of a $P$-code (Appendix~\ref{app:blowup}). Within that family,
$m$ copies of the $P$-code gain at most $m$ units of distance, and only $m=2$ gains exactly $m$
(Proposition~\ref{prop:optimal}): tripling and its relatives exist but are less efficient.

% =============================================================================
\section{A geometric picture}\label{sec:geometry}
% =============================================================================

We now formulate a geometric picture, closely related to the connection between triorthogonality and anomaly cancellation~\cite{prakashGeneration}, of triorthogonality and the doubling procedure. This picture suggests a slightly more general doubling procedure than the one described above.

\subsection{Columns as charges}\label{sec:points}

Let $W\subseteq\F_p^n$ be a subspace with coefficients $\mathbf c$, and let $G$ be a generator matrix
with rows $g^0,g^1,\dots,g^r$. Write its $l$-th column as $(y_l,a_l)$, with $y_l$ the entry of
$g^0$ and $a_l\in\F_p^r$ the entries of the other rows. Assume for now that every
$y_l\neq0$.\footnote{A column with $y_l=0$ is a point at infinity. The lemmas below still hold in
the homogeneous form $\sum_l\mathbf c_lh_l^{\otimes3}=0$ for a space and
$\sum_l\mathbf c_lh_l^{\otimes3}=\lambda e_0^{\otimes3}$, with $e_0=(1,0,\dots,0)$, for a code, where
$h_l$ is the $l$-th column; this is Definition~\ref{def:tri} restated. Some seeds used below, such as
the cyclic $[[19,1,3]]_3$ and $[[11,1,3]]_5$ codes, have such columns. For them the pictures are only
partly literal, but the algebraic proof of Theorem~\ref{thm:doubling} is unaffected.} We
place at the point
\[
x_l=a_l/y_l\in\F_p^r\qquad\text{the charge}\qquad \chi_l=\mathbf c_l\,y_l^3\in\F_p .
\]
The row $g^0$ plays the role of an affine chart; other choices of $g^0$ give other pictures of the
same space.

\begin{lemma}\label{lem:neutral}
$W$ is a triorthogonal space if and only if
\begin{equation}\label{eq:neutral}
\sum_l\chi_l\,f(x_l)=0\qquad\text{for every polynomial $f$ on $\F_p^r$ of degree at most $3$.}
\end{equation}
\end{lemma}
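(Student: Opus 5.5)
The plan is to unwind both sides into statements about cubic forms evaluated on the columns. A triorthogonal space $W$ means, by definition, that $\tau_{\mathbf c}(u,v,w)=0$ for all $u,v,w\in W$. By trilinearity it suffices to check this on triples of rows of $G$, repetitions allowed. Write the $l$-th column as $h_l=(y_l,a_l)=y_l(1,x_l)$. For row indices $\alpha,\beta,\gamma\in\{0,\dots,r\}$ we then have
\[
\tau_{\mathbf c}(g^\alpha,g^\beta,g^\gamma)=\sum_l \mathbf c_l\,(h_l)_\alpha(h_l)_\beta(h_l)_\gamma=\sum_l \chi_l\,X^{(l)}_\alpha X^{(l)}_\beta X^{(l)}_\gamma,
\]
where $X^{(l)}=(1,x_l)$ and we use $\chi_l=\mathbf c_l y_l^3$.

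The key observation is that, with the convention $X_0=1$, the products $X_\alpha X_\beta X_\gamma$ for $0\le\alpha\le\beta\le\gamma\le r$ run over exactly the monomials in the coordinates of $x\in\F_p^r$ of degree at most $3$. Index $0$ repeated accounts for the lower-degree monomials: $\alpha=\beta=\gamma=0$ gives the constant $1$, two zeros give the linear monomials, and one zero gives the quadratic ones. Hence the vanishing of all $\tau_{\mathbf c}(g^\alpha,g^\beta,g^\gamma)$ is equivalent to $\sum_l\chi_l m(x_l)=0$ for every monomial $m$ of degree $\le3$. By linearity in $f$, this is equivalent to~\eqref{eq:neutral} for every polynomial $f$ of degree at most $3$.

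This gives both directions at once. For ($\Rightarrow$), triorthogonality gives the vanishing for every row triple, hence for every monomial, hence for every $f$. For ($\Leftarrow$), apply~\eqref{eq:neutral} to each monomial to get the vanishing on row triples, then extend to all of $W$ by trilinearity.

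One technical point needs a remark. Polynomials on $\F_p^r$ are really functions, and for small $p$ ($p=3$, where $x^3=x$ as functions) distinct formal monomials may coincide as functions. This is harmless. We only need that each formal monomial of degree $\le3$ gives a well-defined function to plug into~\eqref{eq:neutral}, and that every polynomial of degree $\le3$ is a linear combination of these. Both hold regardless of such identifications.

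I expect no real obstacle. The one step requiring care is the bookkeeping that the symmetric index triples with the affine coordinate $X_0=1$ correspond bijectively to monomials of degree $\le3$, which is the standard homogenization argument. The assumption $y_l\neq0$ is used only to divide by $y_l$ in forming $x_l$.
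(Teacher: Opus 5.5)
Your proposal is correct and follows the paper's own proof. Both pull $y_l^3$ out of $\tau_{\mathbf c}$ on row triples, identify the triples (with $X_0=1$) with the monomials of degree at most $3$ in $x_l$, and conclude by trilinearity in the rows and linearity in $f$; your remark about formal monomials coinciding as functions for $p=3$ is a harmless detail that the paper leaves implicit.
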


\begin{proof}
Pulling $y_l^3$ out of each product gives
\[
\begin{aligned}
\tau(g^0,g^0,g^0)&=\textstyle\sum_l\chi_l, &
\tau(g^0,g^0,g^i)&=\textstyle\sum_l\chi_l\,x_{l,i},\\
\tau(g^0,g^i,g^j)&=\textstyle\sum_l\chi_l\,x_{l,i}x_{l,j}, &
\tau(g^i,g^j,g^k)&=\textstyle\sum_l\chi_l\,x_{l,i}x_{l,j}x_{l,k},
\end{aligned}
\]
and $\tau$ vanishes on $W$ exactly when all of these vanish, which is~\eqref{eq:neutral} for $f=1$
and for all monomials of degree $1$, $2$ and $3$.
\end{proof}

In electrostatic language, the charged points form a \emph{charge distribution}, and
Lemma~\ref{lem:neutral} says that its total charge and its dipole, quadrupole and octupole moments
vanish: the distribution is \emph{neutral up to octupole order}. The moments are the sums
$\sum_l\chi_l$, $\sum_l\chi_lx_{l,i}$, $\sum_l\chi_lx_{l,i}x_{l,j}$ and
$\sum_l\chi_lx_{l,i}x_{l,j}x_{l,k}$. (Positions, charges and moments take values in $\F_p$, and the
moments are the full Cartesian tensors rather than their traceless parts.)

A triorthogonal matrix $(H,\mathbf c)$ with one logical row and logical
exponent $\lambda$ is a triorthogonal space punctured at one coordinate: appending to $H$ the
column $e_0=(1,0,\dots,0)$ with coefficient $-\lambda$ changes $\tau(h^0,h^0,h^0)$ by $-\lambda$ and no
other value of $\tau$, so the extended rows span a triorthogonal space, and conversely. Taking the
logical row as $g^0$, the restored coordinate is a point at the origin with charge $-\lambda$.
Removing it leaves a distribution whose moments are those of a point charge $\lambda$ at the origin:

\begin{lemma}\label{lem:moments}
Let $(H,\mathbf c)$ have logical row $h^0$, stabilizer rows $h^1,\dots,h^r$, and columns
$(y_l,a_l)$ with every $y_l\neq0$. Then $(H,\mathbf c)$ is triorthogonal if and only if
\begin{equation}\label{eq:moments}
\sum_l\chi_l\,f(x_l)=\lambda\,f(0)\qquad\text{for every polynomial $f$ on $\F_p^r$ of degree at most $3$,}
\end{equation}
with $\lambda\neq0$, in which case $\lambda$ is the logical exponent.
\end{lemma}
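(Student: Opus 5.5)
The plan is to repeat the computation in the proof of Lemma~\ref{lem:neutral}, now keeping track of the single exceptional triple $(y,y,y)$ from Definition~\ref{def:tri}. Here the logical row $h^0=y$ plays the role of $g^0$. Pulling $y_l^3$ out of each product and using $x_l=a_l/y_l$ and $\chi_l=\mathbf c_ly_l^3$ gives
\[
\tau_{\mathbf c}(h^0,h^0,h^0)=\textstyle\sum_l\chi_l,\qquad
\tau_{\mathbf c}(h^0,h^0,h^i)=\textstyle\sum_l\chi_lx_{l,i},
\]
and similarly $\tau_{\mathbf c}(h^0,h^i,h^j)=\sum_l\chi_lx_{l,i}x_{l,j}$ and $\tau_{\mathbf c}(h^i,h^j,h^k)=\sum_l\chi_lx_{l,i}x_{l,j}x_{l,k}$. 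Since $\tau_{\mathbf c}$ is symmetric, every triple of rows is, up to reordering, one of these four types.

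Next, I would translate condition~\eqref{eq:moments} into statements about monomials. The map $f\mapsto\sum_l\chi_lf(x_l)-\lambda f(0)$ is linear, so~\eqref{eq:moments} holds for all $f$ of degree at most $3$ if and only if it holds for every monomial of degree at most $3$ in the coordinates of $\F_p^r$. For $f=1$ the condition reads $\sum_l\chi_l=\lambda$. For any monomial of degree $1$, $2$ or $3$ we have $f(0)=0$, so the condition reads $\sum_l\chi_lf(x_l)=0$.

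Comparing this with the displayed formulas, condition~(1) of Definition~\ref{def:tri} (vanishing on all triples except $(y,y,y)$) is exactly the vanishing of the moments for the monomials of degree $1$, $2$, $3$. The remaining equation $\tau_{\mathbf c}(y,y,y)=\sum_l\chi_l$ identifies $\lambda$.

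For the forward direction, assume $(H,\mathbf c)$ is triorthogonal with logical exponent $\lambda=\tau_{\mathbf c}(y,y,y)\neq0$. Then~\eqref{eq:moments} holds with this $\lambda$. Conversely, if~\eqref{eq:moments} holds with some $\lambda\neq0$, then condition~(1) holds, and $\tau_{\mathbf c}(y,y,y)=\lambda\neq0$ gives condition~(2), with $\lambda$ equal to the logical exponent.

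There is no real obstacle; the argument is bookkeeping. The one point needing care is that the monomials of degree $\le3$ span the polynomials in question, so that testing on monomials suffices, and that the origin contributes only to the constant term. Alternatively, one could argue through the extended matrix: append the column $e_0$ with coefficient $-\lambda$, which is a point at the origin with charge $-\lambda$, and apply Lemma~\ref{lem:neutral} to the resulting triorthogonal space, as indicated in the paragraph preceding the statement.
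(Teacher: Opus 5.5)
Your argument is correct, but your main route differs from the paper's.

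The paper proves the lemma in one line. It restores the punctured coordinate, appending the column $e_0=(1,0,\dots,0)$ with coefficient $-\lambda$, which is a point at the origin with charge $-\lambda$. This changes only $\tau_{\mathbf c}(h^0,h^0,h^0)$, and the paper then applies Lemma~\ref{lem:neutral} to the extended space, so that the exceptional triple $(y,y,y)$ is absorbed into the term $-\lambda f(0)$.

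You instead redo the moment computation of Lemma~\ref{lem:neutral} directly. You match each monomial of degree $1$, $2$, $3$ with a non-exceptional triple of rows, and treat the constant monomial separately as the definition of $\lambda$.

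Your version is self-contained and avoids the small bookkeeping hidden in the reduction. In the converse, the appended coefficient must be chosen from the given $\lambda$, and one must check that appending $e_0$ changes no other value of $\tau_{\mathbf c}$. The paper's version is shorter given Lemma~\ref{lem:neutral}. It also makes explicit the picture used later, a triorthogonal code as a neutral distribution with one point removed, which drives the double-layer interpretation of doubling.

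The alternative you sketch in your last sentence is exactly the paper's proof.
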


\begin{proof}
Apply Lemma~\ref{lem:neutral} to the space obtained by restoring the punctured coordinate, whose
point is the origin with charge $-\lambda$.
\end{proof}

So a triorthogonal space is a neutral charge distribution, and a triorthogonal code is a neutral
distribution with one point removed, which then looks, up to octupole order, like a point charge of the opposite sign at
the position of the removed point.

For a $P$-code the logical row is $\one$, so the points are the columns of
a generator matrix of $S$, each with charge $1$. The conditions $S\perp\one$ and
$S\subseteq S^\perp$ say that the dipole and quadrupole moments vanish, and the condition $n\not\equiv0 \pmod p$ says that
the total charge is nonzero. So a $P$-code looks like a point charge at the origin up to
\emph{quadrupole} order:
\begin{equation}\label{eq:Pmoments}
\sum_l f(x_l)=n\,f(0)\qquad\text{for every polynomial $f$ of degree at most $2$.}
\end{equation}

\begin{example}[Reed--Muller codes]
The points of the punctured Reed--Muller code are all nonzero points of $\F_p^m$, each with charge
$1$. Every polynomial of degree less than $m(p-1)$ sums to zero over $\F_p^m$, so
$\sum_{x\neq0}f(x)=-f(0)$ for such $f$: the point set looks like a charge $-1$ at the origin, and
all of $\F_p^m$, with unit charges, is a neutral distribution. For
$m=2$ and $p=3$ this covers all $f$ of degree at most $3$, giving the $[[8,1,2]]_3$ code.
\end{example}

The distance of the code also admits a geometric interpretation.

\begin{lemma}\label{lem:affinehull}
If every $y_l\neq0$, the distance of the code of $H$ is the smallest number of points $x_l$ whose
affine hull contains the origin.
\end{lemma}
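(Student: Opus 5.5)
The plan is to translate the definition of the distance into a statement about linear dependencies among the columns $(y_l,a_l)$, and then read those dependencies affinely. Recall that the distance is $d=\min\{\wt(v):v\in\mathcal A^\perp\setminus\mathcal H^\perp\}$. A vector $v$ lies in $\mathcal A^\perp$ exactly when $\sum_l v_l a_l=0\in\F_p^r$, since this says that $v$ is orthogonal to every stabilizer row $h^1,\dots,h^r$. It lies outside $\mathcal H^\perp$ exactly when in addition $\sum_l v_l y_l\neq0$. So $v$ is a nontrivial $Z$-logical if and only if
\[
\sum_l v_l\,(y_l,a_l)=(\mu,0)\qquad\text{with }\mu\neq0 .
\]
Rescaling $v$ by $\mu^{-1}$ does not change its support, so we may take $\mu=1$.

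Next we pass to the points $x_l=a_l/y_l$, which is legitimate since every $y_l\neq0$. Set $w_l=v_ly_l$. Because $y_l\neq0$, the vector $w$ has the same support as $v$. The condition above becomes
\[
\sum_l w_l=1,\qquad \sum_l w_l x_l=0 .
\]
This says exactly that the origin is an affine combination of the points $x_l$ with $l\in\operatorname{supp}(w)$. Conversely, suppose the origin lies in the affine hull of some set $I$ of points. Then there are weights $w_l$, $l\in I$, with $\sum w_l=1$ and $\sum w_lx_l=0$. Setting $v_l=w_l/y_l$ gives a nontrivial $Z$-logical supported in $I$.

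It follows that the minimum weight of a nontrivial $Z$-logical equals the minimum size of a set of points whose affine hull contains the origin. Here one should check that taking minimal sets is harmless. If the origin lies in the affine hull of $I$, then the support of the resulting $w$ is a subset of $I$ whose hull still contains the origin, because the zero weights can simply be dropped. The minimum over sets therefore equals the minimum over supports.

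The only subtlety is bookkeeping rather than a real obstacle. When points coincide ($x_l=x_{l'}$ with $l\neq l'$), "number of points" must be counted with multiplicity, as a number of column indices. In that sense the count matches $\wt(v)$. Since the distance was identified with the $Z$-distance earlier in the paper, no further argument is needed.
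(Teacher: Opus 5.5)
Your proof is correct and follows the paper's argument: the paper likewise characterizes nontrivial $Z$-logicals by $\sum_l v_la_l=0$ and $\sum_l v_ly_l\neq0$, normalizes to $\alpha_l=v_ly_l/\sum_kv_ky_k$ to exhibit the origin as an affine combination of the points in the support, and notes the converse. Your multiplicity remark is harmless but not needed: two indices at the same point can have their weights merged, so a minimal support never contains both, and counting distinct points or column indices gives the same minimum.
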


\begin{proof}
A vector $v$ is a nontrivial $Z$-logical exactly when $\sum_lv_la_l=0$ and $\sum_lv_ly_l\neq0$. Then
$\alpha_l=v_ly_l/\sum_kv_ky_k$ satisfies $\sum_l\alpha_l=1$ and $\sum_l\alpha_lx_l=0$, so the origin is an affine
combination of the points in the support of $v$; and conversely.
\end{proof}

Thus $d\ge3$ means that no point is at the origin and no line through two points passes through
it. In the Reed--Muller example $x$ and $-x$ are both points, so $d=2$.

\subsection{Doubling as a double layer}\label{sec:prism}

There are two natural geometric constructions to obtain new neutral charge distributions from old ones. One is to superimpose two charge distributions that are neutral up to a given order, and the result remains neutral. If two points of opposite charge are superimposed at the same position, they cancel, reducing the number of charges. The second is a double layer construction that raises by one the order up to which a distribution is neutral, at the cost of two copies. 
\begin{lemma}\label{lem:doublelayer}
Let $\mathcal X$ be a charge distribution of points $x_l\in\F_p^r$ with charges $\chi_l$ that is neutral up to quadrupole
order, and let $\Gamma\in\F_p$. Then the distribution in $\F_p^{r+1}$ consisting of the points $(x_l,0)$ with
charges $-\Gamma\chi_l$ and the points $(x_l,1)$ with charges $+\Gamma\chi_l$ is neutral up to octupole order.
\end{lemma}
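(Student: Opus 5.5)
We have to show that $\sum \chi'\, f(x') = 0$ for every polynomial $f$ of degree at most $3$ on $\F_p^{r+1}$, where $x'$ runs over the new points and $\chi'$ are their charges. The moment conditions are linear in $f$, so it suffices to check monomials. Write a monomial as $g(x)\,t^k$, where $x\in\F_p^r$ are the old coordinates, $t$ is the new coordinate, $k\ge 0$, and $\deg g + k\le 3$.

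The contribution of such a monomial is
\[
\sum_l(-\Gamma\chi_l)\,g(x_l)\,0^k+\sum_l\Gamma\chi_l\,g(x_l)\,1^k
=\Gamma\bigl(1^k-0^k\bigr)\sum_l\chi_l\,g(x_l),
\]
with the convention $0^0=1$. We now split into cases on $k$.

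\emph{The case $k=0$.} Here $1^k-0^k=0$. The two layers cancel identically, and this holds for every $g$ of degree at most $3$. It is the point of the opposite charges: no octupole information about $\mathcal X$ is required.

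\emph{The case $k\ge1$.} Here $1^k-0^k=1$, and the contribution is $\Gamma\sum_l\chi_l\,g(x_l)$. Since $\deg g\le 3-k\le 2$, this sum vanishes because $\mathcal X$ is neutral up to quadrupole order.

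Both cases vanish, so the new distribution is neutral up to octupole order. There is no real obstacle in this argument. The only care needed is the bookkeeping that the $t$-degree absorbs at least one unit of the total degree whenever the layers fail to cancel. This is exactly why the order of neutrality rises by one, from quadrupole to octupole.
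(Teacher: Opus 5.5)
Your proof is correct and is essentially the paper's argument. The paper writes the contribution of $f$ as $\Gamma\sum_l\chi_l\,[f(x_l,1)-f(x_l,0)]$ and observes that this finite difference cancels the terms without the new coordinate and leaves a polynomial of degree at most $2$ in $x$. That is your case split $k=0$ versus $k\ge1$, which you carry out monomial by monomial.
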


\begin{proof}
For a polynomial $f(x,z)$ of degree at most $3$ the new distribution gives
$\Gamma\sum_l\chi_l\big[f(x_l,1)-f(x_l,0)\big]$. The bracket is a finite difference in $z$, so it is a
polynomial in $x$ of degree at most $2$: terms of $f$ without $z$ cancel between the layers, and
every other term contains $z$, so its degree in $x$ is at most $2$. Its sum against the charges of $\mathcal X$ vanishes.
\end{proof}

We now show that we can re-interpret the doubling construction in terms of the above operations. Let us first translate the doubled matrix of Theorem~\ref{thm:doubling} to a charge distribution, with its punctured coordinate
restored, dividing each column by its entry in the logical row $(-\kappa\one,\kappa\one,y)$. The
coordinates are the base coordinates $x$ (one per row $(s,-s,0)$), a height $z$ (the linking row) and
seed coordinates $w$ (one per row $(0,0,t)$). With $x'_l=-x_l/\kappa$, $\Gamma=\eta\kappa^3$ and $N$ the total
charge of the $P$-code ($N=n_P$ when it is unweighted), the qudits become the following charges:
\begin{center}
\begin{tabular}{lll}
\toprule
qudit & point $(x,z,w)$ & charge\\
\midrule
first copy, coordinate $l$ & $(x'_l,\,0,\,0)$ & $-\Gamma$\\
second copy, coordinate $l$ & $(x'_l,\,1,\,0)$ & $+\Gamma$\\
seed, coordinate $j$ & $(0,\,1,\,t_j/y_j)$ & $\mathbf c_jy_j^3$\\
restored coordinate & $(0,\,0,\,0)$ & $-\lambda_T=\Gamma N$\\
\bottomrule
\end{tabular}
\end{center}

This distribution is the superposition of two neutral ones (Figure~\ref{fig:prism}). The first is a double layer.
The points $x'_l$ of the $P$-code, together with the origin with charge $-N$, form a distribution neutral up
to quadrupole order, by~\eqref{eq:Pmoments}, and by Lemma~\ref{lem:doublelayer} its double layer is
neutral up to octupole order. It consists of the two copies, the origin with charge $\Gamma N$, and the
point $Q=(0,1,0)$ with charge $-\Gamma N$. The second is the triorthogonal seed. With its punctured coordinate
restored, the seed is a triorthogonal space, so by Lemma~\ref{lem:neutral} its points $t_j/y_j$,
together with the origin with charge $-\lambda_T$, form a neutral distribution; placed in the slice $x=0$,
$z=1$, its origin is at $Q$.
The two distributions meet at $Q$, where the charges are $-\Gamma N$ and $-\lambda_T$. The
condition~\eqref{eq:kappa} on $\kappa$ is exactly $\Gamma N=-\lambda_T$, so these cancel and $Q$ disappears.

The result is a distribution neutral up to octupole order, that is, a triorthogonal space by
Lemma~\ref{lem:neutral}, whose point at the origin has charge $\Gamma N=-\lambda_T$. Puncturing it there
leaves a point charge $\lambda_T$ at the origin: the doubled code, with the logical exponent of the
seed. Thus the $P$-code controls the moments up to quadrupole order, the opposite layers raise
this to octupole order, and the seed absorbs the charge left at $Q$.

\begin{figure}[htbp]
\centering
\begin{tikzpicture}[x=1cm,y=1cm,>=stealth]
  \draw[thick,fill=blue!6] (0,0) -- (5,0) -- (6.2,1.2) -- (1.2,1.2) -- cycle;
  \draw[thick,fill=red!6]  (0,2.6) -- (5,2.6) -- (6.2,3.8) -- (1.2,3.8) -- cycle;
  \node[left] at (0,0.3) {$z=0$};
  \node[left] at (0,2.9) {$z=1$};
  \foreach \px/\py in {1.0/0.3, 2.0/0.9, 4.2/0.4, 4.9/1.0, 2.8/0.25}{
    \fill[blue!70!black] (\px,\py) circle (2pt);
    \node[blue!70!black,font=\scriptsize] at (\px+0.18,\py+0.2) {$-$};
    \draw[gray,dashed] (\px,\py) -- (\px,\py+2.6);
    \fill[red!70!black] (\px,\py+2.6) circle (2pt);
    \node[red!70!black,font=\scriptsize] at (\px+0.18,\py+2.8) {$+$};
  }
  \draw (3.6,0.6) node[cross out,draw,inner sep=2pt] {};
  \node[below,font=\small] at (3.6,0.55) {origin};
  \draw (3.6,3.2) node[cross out,draw,inner sep=2pt] {};
  \node[font=\small] at (3.2,3.45) {$Q$};
  \draw[->,thick,green!50!black] (3.6,3.2) -- (3.6,5.0);
  \foreach \sy in {3.6,4.1,4.5}{\fill[green!50!black] (3.6,\sy) circle (2pt);}
  \node[right,green!50!black,font=\small] at (3.7,4.7) {seed points (directions $w$)};
\end{tikzpicture}
\caption{The doubled code as a double layer. The points of the $P$-code, together with the origin,
appear in two parallel layers with opposite charges, and the points of the seed sit above $Q$. At
$Q$ the double layer carries charge $-\Gamma N$ and the restored point of the seed carries $-\lambda_T$;
by~\eqref{eq:kappa} these cancel. Puncturing the origin, which carries charge $\Gamma N=-\lambda_T$, gives
the doubled code.}
\label{fig:prism}
\end{figure}

In the language of Lemma~\ref{lem:affinehull}, the two cheapest ways to
reach the origin are via a $Z$-logical of the $P$-code within the bottom layer ($d_P$ points), or a
vertical pair of points together with seed points whose affine hull contains $Q$ ($2+d_T$
points), which is the formula of Theorem~\ref{thm:doubling}.

A punctured Reed--Muller code in one more dimension consists of
all $p$ parallel layers $z=0,\dots,p-1$, which raises the order by $p-1$ at the cost of $p$ layers.
Doubling raises it by one with two layers, and among tensor-product constructions this is the most
economical choice (Proposition~\ref{prop:optimal}).

\subsection{Weighted $P$-codes}\label{sec:masses}

The moment conditions see only the charges. Two qudits at the same point with charges $\chi$ and
$\chi'$ act exactly like one qudit with charge $\chi+\chi'$, and merging them does not lower the
distance, since Lemma~\ref{lem:affinehull} counts points. So a repeated column can be replaced by
a single weighted one.

\begin{definition}\label{def:wpcode}
A \emph{weighted $P$-code} is a pair $(S,\mu)$ with $S\subseteq\F_p^n$ and
$\mu\in(\F_p^\times)^n$ such that $\sum_l\mu_ls_ls'_l=0$ and $\sum_l\mu_ls_l=0$ for all
$s,s'\in S$, and $N:=\sum_l\mu_l\neq0$. Its distance is defined as in
Definition~\ref{def:pcode}.
\end{definition}

Its points are the columns of a generator matrix of $S$, with charges $\mu_l$. The argument
above never required that the charges equal $1$. Theorem~\ref{thm:doubling} therefore holds for
weighted $P$-codes, with coefficients $\eta\,\mu_l$ on both copies and $n_P$ replaced by $N$
in~\eqref{eq:kappa}. In the double layer of Section~\ref{sec:prism} the two copies then carry
charges $-\Gamma\mu_l$ and $+\Gamma\mu_l$.

The bound $n\ge2\dim S+1$ also still holds, since $S$ is totally-isotropic for
the form defined by $\sum_l\mu_lx_l^2$ in the nondegenerate space $\{v:\sum_l\mu_lv_l=0\}$. For $p=3$
the coefficients disappear from the doubled code: a coefficient $-1$ is removed by negating its column,
because $(-x)^3=-x^3$. Even so, weighted $P$-codes help for $p=3$: there are more of them than
unweighted ones, and the shortest is shorter, as illustrated by the following example.

\begin{example}[{$[[15,1,3]]_3$}]\label{ex:15}
The seven points of $\F_3^3$
\[
(0,0,1),\ (0,1,0),\ (0,1,1),\ (1,1,1),\ (1,1,2),\ (1,2,1),\ (1,2,2),
\]
with charges $(2,2,1,2,1,1,2)$, have total charge $11\equiv2 \pmod 3$ and vanishing dipole and quadrupole
moments, and no two of them are on a line through the origin. They form a weighted
$[[7,1,3]]^P_3$ code (Figure~\ref{fig:pcode7}). An exhaustive search shows that seven is the
minimum length of a weighted $P$-code with $d_P\ge3$, while ten is the minimum length of an
unweighted one.

Now double the trivial triorthogonal seed with it, which has $n_T=1$, $y=(1)$, $T_0=0$ and $\lambda_T=1$. Since
$N=11\equiv2$, \eqref{eq:kappa} becomes $\eta\cdot2\cdot\kappa^3+1\equiv0\pmod3$, which is
solved by $\eta=\kappa=1$. The matrix of Theorem~\ref{thm:doubling} then has five rows: the logical row
$(-\one,\one,1)$, the three rows $(s,-s,0)$ for the three rows $s$ of the matrix $S$ whose columns
are the seven points above, and the linking row $(0,\one,1)$. Its coordinates carry the
coefficients $\mu_l$ on both copies and $1$ on the last coordinate. Since $\mu_l=2\equiv-1$ for four
of the seven points, negating those columns in both copies makes every coefficient $1$, and gives
\[
H_{15}=\left(\begin{smallmatrix}
1&1&2&1&2&2&1&2&2&1&2&1&1&2&1\\ \hline
0&0&0&2&1&1&2&0&0&0&1&2&2&1&0\\
0&2&1&2&1&2&1&0&1&2&1&2&1&2&0\\
2&0&1&2&2&1&1&1&0&2&1&1&2&2&0\\
0&0&0&0&0&0&0&2&2&1&2&1&1&2&1
\end{smallmatrix}\right),
\]
a $[[15,1,3]]_3$ code on which $U^{\otimes15}$ implements logical $U$ exactly, with no Pauli
correction (Section~\ref{sec:doubled}). It is shorter than the smallest cyclic code,
$[[19,1,3]]_3$~\cite{PrakashCyclic}.

At distance $4$ and $5$, exhaustive searches find no weighted qutrit $P$-code of length at most $9$ or
$10$ respectively, so the unweighted codes of Table~\ref{tab:pcodes} remain the shortest there.
\end{example}

It was shown~\cite{prakashGeneration} that $[[15,1,3]]_3$ is, up to equivalence, the only triorthogonal qutrit code
of length $15$ and distance $3$, and no shorter one exists. It is also the reduction modulo $3$ of the
charge lattice of one generation of the Standard Model, with the right-handed neutrino removed, and
the coefficients $(2,2,1,2,1,1,2)$ of the $P$-code above are the right-handed isospins $2T_{3R}\bmod3$
of the corresponding fermions. Its description as a doubled code with this $P$-code also appears
in~\cite{prakashGeneration}.

\begin{figure}[htbp]
\centering
\begin{tikzpicture}[x={(0.95cm,0cm)},y={(0.48cm,0.38cm)},z={(0cm,0.95cm)},>=stealth]
  \foreach \a in {0,1,2}{\foreach \b in {0,1,2}{\draw[gray!40] (\a,\b,0) -- (\a,\b,2);}}
  \foreach \a in {0,1,2}{\foreach \c in {0,1,2}{\draw[gray!40] (\a,0,\c) -- (\a,2,\c);}}
  \foreach \b in {0,1,2}{\foreach \c in {0,1,2}{\draw[gray!40] (0,\b,\c) -- (2,\b,\c);}}
  \foreach \a in {0,1,2}{\foreach \b in {0,1,2}{\foreach \c in {0,1,2}{\fill[gray!60] (\a,\b,\c) circle (1.1pt);}}}
  \foreach \a/\b/\c in {0/0/1,0/1/0,1/1/1,1/2/2}{
    \fill[blue!70!black] (\a,\b,\c) circle (2.6pt);
    \node[blue!70!black,font=\scriptsize] at ({\a+0.22},{\b},{\c+0.2}) {$-$};}
  \foreach \a/\b/\c in {0/1/1,1/1/2,1/2/1}{
    \fill[red!70!black] (\a,\b,\c) circle (2.6pt);
    \node[red!70!black,font=\scriptsize] at ({\a+0.22},{\b},{\c+0.2}) {$+$};}
  \draw[thick] (0,0,0) node[cross out,draw,inner sep=2.2pt] {};
  \node[below left,font=\scriptsize] at (0,0,0) {origin};
\end{tikzpicture}
\caption{The weighted $P$-code of Example~\ref{ex:15} in $\F_3^3$, with charges $+1$ (red) and $-1$
(blue). Lines over $\F_3$ wrap around, so collinearity is not visible in the drawing.}
\label{fig:pcode7}
\end{figure}

\subsection{Weighted $P$-codes from arcs}\label{sec:arcs}

For $p=3$ the charges are $\pm1$, and the short weighted $P$-codes above were found by
exhaustive search (Example~\ref{ex:15} and Section~\ref{sec:inputs}). For $p\ge5$ an explicit geometric
construction is available.    Of the codes it gives, only the $[[7,1,4]]^P_7$ code of Example~\ref{ex:20} is used in the doubled codes of Table~\ref{tab:best}, so readers interested only in the results can skip this section, apart from that example and the \emph{arc bound} $n\le p$ on the length of these codes.

The construction separates the problem into two parts. The \emph{directions} of the
points, that is, the lines through the origin on which they lie, determine the distance; the
\emph{positions} of the points on these lines, together with their \emph{charges}, determine the
moments. Taking all directions on the rational normal curve allows both parts to be solved in
closed form. 

Throughout, $r=\dim S$, and $\xi_1,\dots,\xi_n\in\F_p^r$ are the columns of a generator matrix
of $S$, with charges $\mu_l$. If $s^i$ denotes the $i$-th row of this matrix, the conditions
$\sum_l\mu_ls^i_l=0$ and $\sum_l\mu_ls^i_ls^j_l=0$ of Definition~\ref{def:wpcode} are the
entries of the dipole and quadrupole moments,
\begin{equation}\label{eq:wmoments}
\sum_l\mu_l\xi_l=0,\qquad \sum_l\mu_l\,\xi_l\xi_l^{\mathsf T}=0,
\end{equation}
and the total charge $N=\sum_l\mu_l$ must also be nonzero.

A weighted $P$-code satisfies two bounds. The first, $n\ge2r+1$, was shown
in Section~\ref{sec:masses}. The second is $d_P\le r+1$. In linear-algebraic form,
Lemma~\ref{lem:affinehull} states that $v$ is a nontrivial $Z$-logical exactly when
$\sum_lv_l(1,\xi_l)$ is a nonzero multiple of $e_0=(1,0,\dots,0)$. Hence $d_P$ is the smallest
number of columns of the $(r+1)\times n$ matrix whose first row is $\one$ and whose remaining
rows are those of the generator matrix, such that their span contains $e_0$. This matrix has
rank $r+1$, because the rows of the generator matrix are independent and $\one\notin S$; indeed,
taking $s=\one$ in Definition~\ref{def:wpcode} would give $N=\sum_l\mu_l=0$. Some $r+1$ of its
columns therefore form a basis of $\F_p^{r+1}$, and their span contains $e_0$. Together the two
bounds give $d_P\le r+1\le(n+1)/2$, and a code attaining both has parameters
$[[2r+1,1,r+1]]^P_p$.

To separate the roles of directions and positions, write each point as $\xi_l=\sigma_lv_l$, where $v_l$ is
a fixed representative of its direction $[v_l]\in\mathbb P^{r-1}(\F_p)$ and
$\sigma_l\in\F_p^\times$ is its position on that line. Suppose that the columns indexed by a set
$J$ span a space containing $e_0$, say $\sum_{l\in J}\alpha_l(1,\xi_l)=e_0$. The first entry
gives $\sum_{l\in J}\alpha_l=1$, so the $\alpha_l$ are not all zero, and the remaining entries
give $\sum_{l\in J}(\alpha_l\sigma_l)v_l=0$, a nontrivial linear dependency among the
directions. Consequently, if any $r$ of the directions are linearly independent, every such $J$
has at least $r+1$ elements, and $d_P=r+1$ for all choices of positions and charges. A set of
points of $\mathbb P^{r-1}(\F_p)$ with this property is called an \emph{arc}~\cite{Hirschfeld1998}; equivalently, the
matrix with columns $v_l$ generates an MDS code. The arc condition is sufficient but not
necessary: a dependency $\sum_lu_lv_l=0$ yields a $Z$-logical only if $\sum_lu_l/\sigma_l\neq0$, so
other directions can be used with suitable positions. The search of Appendix~\ref{app:search}
exploits this.

For the moments, substituting $\xi_l=\sigma_lv_l$
into~\eqref{eq:wmoments} gives
\begin{equation}
\sum_l\psi_l\,v_l=0,\qquad \sum_l\theta_l\,v_lv_l^{\mathsf T}=0,\qquad
\psi_l=\mu_l\sigma_l,\quad \theta_l=\mu_l\sigma_l^2 .
\end{equation}
For fixed directions both conditions are effectively linear: $\psi$ is a linear relation among the vectors
$v_l$, and $\theta$ is a linear relation among the symmetric matrices $v_lv_l^{\mathsf T}$.
Conversely, given such relations with non-zero entries, one can determine positions
$\sigma_l=\theta_l/\psi_l$ and charges $\mu_l=\psi_l^2/\theta_l$ that
satisfy~\eqref{eq:wmoments}. (The remaining condition, $N=\sum_l\psi_l^2/\theta_l\neq0$, still needs to be
checked separately.) Each relation has $n=2r+1$ unknowns. The dipole condition consists of $r$
equations, one for each entry of $v_l$, so a nonzero dipole relation always exists. The
quadrupole condition consists of one equation for each independent entry of a symmetric
$r\times r$ matrix, that is, $r(r+1)/2$ equations; for $r\ge4$ this exceeds $2r+1$, and for
generic directions the only solution is $\theta=0$. The directions must therefore be chosen so
that the matrices $v_lv_l^{\mathsf T}$ satisfy many linear relations amongst themselves.

Arcs cannot be arbitrarily long. For $r=2$ the directions are lines through the origin of $\F_p^2$, of
which there are $p+1$, and any two distinct lines are independent; an arc in $\mathbb
P^1(\F_p)$ therefore has at most $p+1$ points. The same bound holds in every dimension: for prime
$p$ and $2 \le r \le p$, an arc in $\mathbb P^{r-1}(\F_p)$ has at most $p+1$ points, or equivalently an
MDS code over $\F_p$ of dimension $r$ has length at most $p+1$. This is the MDS conjecture,
proved for prime fields in~\cite{Ball2012}. Since $n=2r+1$ is odd and $p+1$ is even, it gives
$n\le p$, hence $r\le(p-1)/2$ and $d_P\le(p+1)/2$; we refer to this as the \emph{arc bound}. It
permits $d_P=3$ at $n=5$ for $p=5$ and $d_P=4$ at $n=7$ for $p=7$. For $p=3$, however, $r \leq 1$,
which is why we cannot use the arc construction for qutrits.

\paragraph{The rational normal curve.} The standard example of an arc of maximal length is the
\emph{rational normal curve}~\cite{Hirschfeld1998}, consisting of the points
\[
v(t)=(1,t,t^2,\dots,t^{r-1}),\qquad t\in\F_p,
\]
of $\mathbb P^{r-1}(\F_p)$, together with the point $(0,\dots,0,1)$ at infinity, which is not
used here. For $r=2$ it is all of $\mathbb P^1(\F_p)$, and for $r=3$ it is a nondegenerate
conic; conversely, every nondegenerate conic in $\mathbb P^2(\F_p)$ takes this form after a
linear change of coordinates. A matrix with columns $v(t_1),\dots,v(t_n)$, for distinct $t_l$,
generates a Reed--Solomon code. From now on all directions are taken on this curve, which has
two relevant properties.

First, it is an arc. The point $v(t)$ lies on the hyperplane $\nu\cdot v=0$ exactly when the
polynomial $\nu_0+\nu_1t+\dots+\nu_{r-1}t^{r-1}$ vanishes at $t$, and a nonzero polynomial of
degree at most $r-1$ has at most $r-1$ roots. Hence no $r$ points of the curve lie on a
hyperplane; equivalently, any $r$ of them form an invertible Vandermonde matrix.

Second, its moment conditions reduce to conditions on a line. The entries of $v(t)$ are
$t^0,\dots,t^{r-1}$, and the $(i,j)$ entry of $v(t)v(t)^{\mathsf T}$ is $t^{i+j}$, which takes
only the values $t^0,\dots,t^{2r-2}$. Hence $\psi$ and $\theta$ are relations exactly when
\begin{equation}\label{eq:linerelations}
\sum_l\psi_l\,f(t_l)=0\ \ \text{for}\ \deg f\le r-1,\qquad
\sum_l\theta_l\,f(t_l)=0\ \ \text{for}\ \deg f\le 2r-2 .
\end{equation}
The quadrupole condition thus consists of $2r-1$ equations rather than $r(r+1)/2$, fewer than
the $2r+1$ unknowns. In the language of Section~\ref{sec:geometry}, $\psi$ and $\theta$ are
charge distributions on the line $\F_p$, supported at the points $t_l$, that are neutral up to
orders $r-1$ and $2r-2$ respectively.

Such distributions can be given explicitly. For distinct
$t_1,\dots,t_n\in\F_p$ put $\Delta_l=\prod_{m\neq l}(t_l-t_m)$. Lagrange interpolation writes
every polynomial $f$ of degree at most $n-1$ as
$$f(t)=\sum_lf(t_l)\prod_{m\ne l}(t-t_m)/\Delta_l,$$ and comparing the coefficients of $t^{n-1}$
on the two sides gives
\begin{equation}\label{eq:lagrange}
\sum_l\frac{f(t_l)}{\Delta_l}=\big[t^{n-1}\big]f .
\end{equation}
Thus the charges $1/\Delta_l$, all of which are nonzero, annihilate every polynomial of degree at
most $n-2$. In coding-theoretic terms, this is the classical fact that the dual of a
Reed--Solomon code is a generalized Reed--Solomon code with column multipliers $1/\Delta_l$~\cite{macwilliams77}.

\begin{proposition}\label{prop:rnc}
Let $p\ge5$, $1\le r\le(p-1)/2$ and $n=2r+1$. Let $t_1,\dots,t_n\in\F_p$ be distinct, let
$c\in\F_p^\times$, and let $q\in\F_p[t]$ have degree $r$, leading coefficient
$\operatorname{lc}(q)$, and $q(t_l)\neq0$ for all $l$. Then the points
$\xi_l=v(t_l)/q(t_l)$, that is, the columns of the $r\times n$ matrix with entries
$t_l^i/q(t_l)$ for $0\le i\le r-1$, with charges $\mu_l=c\,q(t_l)^2/\Delta_l$, form a weighted
$[[2r+1,1,r+1]]^P_p$ code with total charge $N=c\operatorname{lc}(q)^2$.
\end{proposition}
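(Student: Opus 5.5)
The plan is to verify the conditions of Definition~\ref{def:wpcode} in the moment form \eqref{eq:wmoments}, and then obtain the distance from the arc discussion. Write $\xi_l=\sigma_l v_l$ with $v_l=v(t_l)$ and $\sigma_l=1/q(t_l)$. Then $\psi_l=\mu_l\sigma_l=c\,q(t_l)/\Delta_l$ and $\theta_l=\mu_l\sigma_l^2=c/\Delta_l$. By \eqref{eq:linerelations} it suffices to show two things: $\sum_l\psi_lf(t_l)=0$ for $\deg f\le r-1$, and $\sum_l\theta_lf(t_l)=0$ for $\deg f\le 2r-2$.

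Both follow from the Lagrange identity \eqref{eq:lagrange} with $n=2r+1$, which kills every polynomial of degree at most $n-2=2r-1$. For the quadrupole condition, $\sum_l c f(t_l)/\Delta_l=c[t^{2r}]f=0$ whenever $\deg f\le 2r-2$. For the dipole condition, $\sum_l c\,q(t_l)f(t_l)/\Delta_l=c[t^{2r}](qf)$, and $\deg(qf)\le 2r-1$, so this also vanishes. For the total charge, $N=\sum_l\mu_l=\sum_l c\,q(t_l)^2/\Delta_l=c[t^{2r}]q^2=c\operatorname{lc}(q)^2$, which is nonzero. All $\mu_l$ are nonzero because $q(t_l)\neq0$ and $\Delta_l\neq0$, the latter since the $t_l$ are distinct.

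It remains to check that the $r\times n$ matrix has independent rows, so that $\dim S=r$, and to find the distance. Any $r$ of its columns are nonzero rescalings of Vandermonde columns $v(t_l)$ with distinct $t_l$, hence invertible. The rows therefore have rank $r$ (using $n\ge r$), and the directions form an arc. The arc argument preceding the proposition then gives $d_P\ge r+1$, and the general bound $d_P\le r+1$ gives equality. This yields parameters $[[2r+1,1,r+1]]^P_p$.

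The hypothesis $r\le(p-1)/2$ ensures that $2r+1\le p$ distinct $t_l$ exist in $\F_p$. The condition $p\ge5$ ensures that the range of $r$ is nonempty. I expect the only delicate point to be the degree bookkeeping. The dipole needs $\deg(qf)\le n-2$, which holds exactly because $\deg q=r$. The total charge picks out the top coefficient $[t^{n-1}]$ of $q^2$, which requires $\deg q^2=2r=n-1$ exactly.
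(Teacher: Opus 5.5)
Your proposal is correct and follows the paper's proof essentially step for step. It makes the same choice $\theta_l=c/\Delta_l$, $\psi_l=c\,q(t_l)/\Delta_l$, uses the same degree counts against the Lagrange identity for the quadrupole, dipole and total-charge conditions, checks row independence by the Vandermonde/root-count argument, and gets $d_P=r+1$ from the arc property. One side remark is slightly off: $p\ge5$ is not what makes the range of $r$ nonempty, since for $p=3$ one still has $r=1$; the hypothesis only excludes that trivial case, and your argument never uses it.
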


\begin{proof}
Take $\theta_l=c/\Delta_l$ and $\psi_l=c\,q(t_l)/\Delta_l$, which give
$\sigma_l=\theta_l/\psi_l=1/q(t_l)$ and $\mu_l=\psi_l^2/\theta_l=c\,q(t_l)^2/\Delta_l$. Since
$n-2=2r-1$, by~\eqref{eq:lagrange} $\theta$ annihilates every polynomial of degree at most
$2r-1$, which includes the quadrupole condition in~\eqref{eq:linerelations}. If $\deg f\le r-1$
then $\deg(qf)\le2r-1$, so $\sum_l\psi_lf(t_l)=c\sum_l(qf)(t_l)/\Delta_l=0$, which is the dipole
condition. Since $q^2$ has degree exactly $2r=n-1$,~\eqref{eq:lagrange} gives
$N=c\sum_lq(t_l)^2/\Delta_l=c\operatorname{lc}(q)^2\neq0$. The $r$ rows are independent: a
vanishing combination $\sum_ia_it_l^i/q(t_l)=0$ for all $l$ would give a polynomial
$\sum_ia_it^i$ of degree at most $r-1$ with $n>r-1$ roots. Finally, the directions $v(t_l)$ form
an arc, so $d_P=r+1$.
\end{proof}

A suitable $q$ always exists. If $n<p$, one may take $q=(t-b)^r$ with $b$ distinct from every
$t_l$. If $n=p$, every element of $\F_p$ is some $t_l$, so $q$ must have no roots in $\F_p$; in
that case $r=(p-1)/2\ge2$, and $q$ may be taken irreducible of degree $r$. In particular, a
weighted $[[p,1,(p+1)/2]]^P_p$ code exists for every prime $p\ge5$. In coding-theoretic terms, $S$ is the generalized Reed--Solomon code with
evaluation points $t_l$ and column multipliers $1/q(t_l)$, and the charges
$\mu_l$ are chosen so that its dual with respect to $\sum_l\mu_lx_ly_l$ is
$S+\F_p\one$, the generalized Reed--Solomon code of dimension $r+1$ with the
same points and multipliers. Similar choices of
evaluation points and column multipliers are used to construct self-dual and self-orthogonal
MDS codes from generalized Reed--Solomon
codes~\cite{GrasslGulliver2008,JinXing2017,Yan2019,FangLiuLuo2021}.

\begin{example}[{$[[5,1,3]]^P_5$}]\label{ex:5}
Take $p=5$, $r=2$, $c=1$ and $t_l=0,1,2,3,4$, all of $\F_5$. Then $\Delta_l$ is the derivative
of $t^5-t$ at $t_l$, so $\Delta_l=-1$ for every $l$, and $\mu_l=-q(t_l)^2$. With $q(t)=t^2+2$,
which has no roots in $\F_5$, we have $q(t_l)=(2,3,1,1,3)$, and the proposition gives
\[
S_5=\left(\begin{smallmatrix}
3&2&1&1&2\\
0&2&2&3&3
\end{smallmatrix}\right),\qquad \mu=(1,1,4,4,1),\qquad N=1 .
\]
For instance, the first column is $(1,0)/2=(3,0)$ and the second is $(1,1)/3=(2,2)$. The dipole
and quadrupole moments are readily verified by hand, and since the five directions $(1,t_l)$ are
distinct points of $\mathbb P^1(\F_5)$, $d_P=3$. Doubling the trivial seed with it ($\eta=1$, $\kappa=-1$) gives an $[[11,1,3]]_5$ code, which is equivalent to the cyclic code of~\cite{PrakashCyclic}.
\end{example}

The proposition covers essentially every code on the curve.
By~\eqref{eq:lagrange}, $\psi_l=Q(t_l)/\Delta_l$ is a dipole relation for every
$Q$ of degree at most $r$, and $\theta_l=g(t_l)/\Delta_l$ is a quadrupole relation
for every $g$ of degree at most $1$; these spaces have dimensions $r+1=n-r$ and
$2=n-(2r-1)$, so they are all the relations. Every weighted $P$-code of length
$2r+1$ with directions on the curve therefore has $\sigma_l=g(t_l)/Q(t_l)$ and
$\mu_l=Q(t_l)^2/(g(t_l)\Delta_l)$, and the proposition is the case of constant
$g$. When $n=p$ the $t_l$ exhaust $\F_p$, so a nonconstant $g$ would vanish at one
of them, and the proposition gives every such code, including that of
Example~\ref{ex:20} once the unused eighth point of its conic is moved to
infinity.

\begin{example}[{$[[7,1,4]]^P_7$ and a $[[20,1,4]]_7$ triorthogonal code}]\label{ex:20}
The columns of
\[
S_7=\left(\begin{smallmatrix}
0&0&2&1&1&1&4\\
0&4&0&1&2&3&6\\
2&0&0&1&3&2&3
\end{smallmatrix}\right),\qquad \mu=(2,2,2,4,4,1,2),
\]
lie on the conic $3x_0x_1+3x_0x_2+x_1x_2=0$ and form a weighted $P$-code of distance $4$, with
total charge $N=3$; up to a linear change of coordinates, it is an instance of
Proposition~\ref{prop:rnc}. Seven is the minimum length for $p=7$: by the bounds above,
$d_P\ge4$ requires $\dim S\ge3$, and hence $n\ge7$. Doubling the $[[6,1,2]]_7$ Reed--Solomon
seed~\cite{Krishna_2019} with it gives a $[[20,1,4]]_7$ code. Its coefficients include
non-cubes, so, unlike for $p=3$, they cannot be removed by rescaling columns. It is shorter than
the $[[28,1,5]]_7$ code of~\cite{PrakashCyclic}, the shortest there with $d\geq 4$.
\end{example}

For $p=2$ the charges are trivial, and the picture reduces to the correspondence between
triorthogonal codes and Reed--Muller polynomials used by Nezami and Haah~\cite{Nezami_2022}; doubling
the Steane code with the trivial seed, for instance, gives the fifteen nonzero points of $\F_2^4$, the
$[[15,1,3]]$ code. It gives no new qubit codes, since nontrivial charges and long arcs, the two
ingredients that help for odd $p$, do not exist over $\F_2$.

% =============================================================================
\section{Codes}\label{sec:results}
% =============================================================================

\subsection{The input codes}\label{sec:inputs}

The doubled codes below use the $P$-codes of Table~\ref{tab:pcodes}. Besides the weighted codes of
Sections~\ref{sec:masses} and~\ref{sec:arcs}, they come from cyclic and quadratic-residue codes and from search.

Let $\gcd(n,p)=1$ and let $S$ be cyclic with spectral support
$\Sigma\subseteq\Zn$~\cite{PrakashCyclic}: if $\alpha$ is a primitive $n$-th root of unity in an extension of
$\F_p$, then $S$ is spanned by the vectors $(\alpha^{jl})_l$ with $j\in\Sigma$, after the usual change of
basis.\footnote{We take $\alpha$ to be a root of the first irreducible factor of the cyclotomic
polynomial $\Phi_n$ over $\F_p$ in the factorization computed by SymPy. Another choice changes
$\Sigma$ by a multiplier, and for the codes taken from~\cite{PrakashCyclic} this choice gives codes
equivalent to those listed there.} For $t\ge1$ we write $t\Sigma=\Sigma+\dots+\Sigma$ ($t$ terms)
for the sumset, not the dilation $\{ts:s\in\Sigma\}$.

\begin{lemma}\label{lem:cyclicP}
A cyclic code $S$ is a $P$-code if and only if $0\notin\Sigma\cup2\Sigma$. Its distance is the
minimum weight of a vector with nonzero coordinate sum in the cyclic code with spectral support
$\Zn\setminus(-\Sigma)$.
\end{lemma}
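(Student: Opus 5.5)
We work in the spectral (Fourier) description of cyclic codes over the extension field $\F_{p^m}$ containing $\alpha$. For $j\in\Zn$ write $e^j=(\alpha^{jl})_{l\in\Zn}$. Since $\gcd(n,p)=1$, these $n$ vectors form a basis of $\F_{p^m}^n$. They satisfy the orthogonality relation
\[
\beta(e^j,e^k)=\sum_l\alpha^{(j+k)l}=n\,[j+k=0],
\]
with $n\neq0$ in $\F_p$. Also, $\one=e^0$, and the entrywise product is $e^j\ast e^k=e^{j+k}$. Let $S_{\mathrm{ext}}$ be the $\F_{p^m}$-span of $\{e^j:j\in\Sigma\}$; then $S$ is its subfield subcode, with $S\otimes\F_{p^m}=S_{\mathrm{ext}}$. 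Bilinear conditions such as $S\subseteq S^\perp$ and $S\perp\one$ may therefore be checked over the extension field, where they become conditions on spectral supports.

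\emph{Step 1: the $P$-code conditions.} The length condition $n\not\equiv0\pmod p$ holds automatically from $\gcd(n,p)=1$. For $j\in\Sigma$ we have $\beta(e^j,\one)=\beta(e^j,e^0)=n[j=0]$. So $S\perp\one$ holds if and only if $0\notin\Sigma$. For $j,k\in\Sigma$ we have $\beta(e^j,e^k)=n[j+k=0]$. So $S\subseteq S^\perp$ holds if and only if no $j,k\in\Sigma$ satisfy $j+k=0$, that is, $0\notin2\Sigma$. Combining the two gives exactly $0\notin\Sigma\cup2\Sigma$. Note that $2\Sigma$ here is the sumset, which includes the case $j=k$, matching the convention stated before the lemma.

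\emph{Step 2: the dual and the distance.} By the orthogonality relation, $S_{\mathrm{ext}}^\perp$ is spanned by those $e^k$ with $j+k\neq0$ for all $j\in\Sigma$, that is, by $e^k$ with $k\in\Zn\setminus(-\Sigma)$. Taking $\F_p$-points, $S^\perp$ is the cyclic code with spectral support $\Zn\setminus(-\Sigma)$. This uses the standard fact that the dual of a subfield subcode of a Galois-closed code is the subfield subcode of the dual. Here $\Sigma$ is a union of cyclotomic cosets $\Sigma=p\Sigma$, so $-\Sigma$ and its complement are also unions of cyclotomic cosets. The distance formula $d_P=\min\{\wt(v):v\in S^\perp,\ \sum_lv_l\neq0\}$ of Definition~\ref{def:pcode} is then exactly the claimed statement.

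\emph{Expected obstacle.} The main difficulty is the descent from $\F_{p^m}$ to $\F_p$. We must justify that $S$ has dimension $|\Sigma|$ with $S\otimes\F_{p^m}=S_{\mathrm{ext}}$, and that orthogonality over $\F_p$ is equivalent to orthogonality over the extension field. Both follow from Galois invariance of $\Sigma$ via the usual trace or Delsarte argument. This is the content of the "usual change of basis" in~\cite{PrakashCyclic}, so we would invoke it rather than re-derive it.
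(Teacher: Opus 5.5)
Your proof is correct and follows essentially the same route as the paper, which phrases self-orthogonality as $\one\perp S^{\langle2\rangle}$ and cites the fact that the Schur square has spectral support $2\Sigma$; that is exactly your computation $\beta(e^j,e^k)=\beta(\one,e^j\ast e^k)=n[j+k=0]$, and the paper then identifies $S^\perp$ as the cyclic code with spectral support $\Zn\setminus(-\Sigma)$ just as you do. Your explicit remark that $n\not\equiv0\pmod p$ follows from $\gcd(n,p)=1$, and your care over the sumset convention for $2\Sigma$, are points the paper leaves implicit.
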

\begin{proof}
The code $S$ is orthogonal to $\one$ if and only if $0\notin\Sigma$, and it is self-orthogonal if and only if
$\one\perp S^{\langle2\rangle}$, which by~\cite[Lemma~2]{PrakashCyclic} holds if and only if
$0\notin2\Sigma$. Finally, $S^\perp$ has spectral support $\Zn\setminus(-\Sigma)$.
\end{proof}

This is the criterion of~\cite{PrakashCyclic} for cyclic triorthogonal spaces without the condition on
$3\Sigma$, so the search method there applies.

Let $q\equiv3\pmod4$ be a prime such that $p$ is a
quadratic residue modulo $q$, and let $R$ be the set of nonzero squares modulo $q$. Since $-1$
is a non-residue, $0\notin R\cup2R$, so the expurgated quadratic-residue code with spectral
support $R$ is a $P$-code. Its distance is the minimum odd-like weight of the quadratic-residue
code, and so satisfies the square-root bound $d_P^2-d_P+1\ge q$~\cite{macwilliams77}. For $p=3$,
$q=11$ gives the $[[11,1,5]]_3^P$ ternary Golay code~\cite{Golay1949,2020golay}. For $p=5$, $q=11$ gives
$[[11,1,5]]_5^P$.

We searched all cyclic $P$-codes using Lemma~\ref{lem:cyclicP}, for
$n\le50$ when $p=3$, $n\le39$ when $p=5$, and $n\le30$ when $p=7$, and for $p=3$ we searched
exhaustively at small lengths. A $P$-code of distance at least $3$ is determined by a multiset of
nonzero columns $\xi_l$, any two equal or non-proportional, with $\sum_l\xi_l=0$ and
$\sum_l\xi_l\xi_l^{\mathsf T}=0$ and $n\not\equiv0\pmod p$. Enumerating these shows that no unweighted qutrit $P$-code with
$d_P\ge3$ has length below $10$, so the $[[10,1,4]]^P_3$ code
\[
S_{10}=\left(\begin{smallmatrix}
1&0&0&0&2&1&0&1&2&2\\
0&1&0&0&2&2&1&0&1&2\\
0&0&1&0&2&2&2&1&0&1\\
0&0&0&1&2&1&2&2&1&0
\end{smallmatrix}\right)
\]
is the shortest unweighted qutrit $P$-code with $d_P\ge3$. With nontrivial coefficients the minimum is $7$
(Example~\ref{ex:15}). For $p=5,7$, beyond the arc bound $n\le p$ of
Section~\ref{sec:arcs}, structured random searches found no weighted $P$-code with $d_P\ge5$
and $n\le10$, nor, for $p=5$, with $d_P\ge6$ and $n\le13$; these searches are not exhaustive.
Table~\ref{tab:pcodes} lists the $P$-codes we use.

\begin{table}[htbp]
\centering\small
\begin{tabular}{@{}lll@{}}
\toprule
$p$ & code & source\\
\midrule
$3$ & $[[7,1,3]]^P_3$ & weighted, Example~\ref{ex:15}; shortest weighted with $d\ge3$\\
    & $[[10,1,4]]^P_3$ & exhaustive search; shortest unweighted with $d\ge3$\\
    & $[[11,1,5]]^P_3$ & ternary Golay / QR, $\Sigma=\{1,3,4,5,9\}$\\
    & $[[13,1,4]]^P_3$ & cyclic\\
    & $[[22,1,7]]^P_3$ & cyclic, $\Sigma=\{1,2,3,5,6,8,9,10,15,18\}$\\
    & $[[23,1,8]]^P_3$ & cyclic / QR\\
\midrule
$5$ & $[[5,1,3]]^P_5$ & weighted, Example~\ref{ex:5}\\
    & $[[8,1,3]]^P_5$ & cyclic, $\Sigma=\{1,2,5\}$\\
    & $[[11,1,5]]^P_5$ & QR\\
    & $[[19,1,7]]^P_5$ & cyclic, $\Sigma=\{1,4,5,6,7,9,11,16,17\}$\\
\midrule
$7$ & $[[6,1,3]]^P_7$ & cyclic, $\Sigma=\{4,5\}$\\
    & $[[7,1,4]]^P_7$ & weighted, on a conic (Example~\ref{ex:20}); shortest with $d\ge4$\\
    & $[[9,1,4]]^P_7$ & cyclic, $\Sigma=\{1,4,6,7\}$\\
    & $[[18,1,6]]^P_7$ & cyclic, $\Sigma=\{2,5,8,11,12,14,15,17\}$\\
\bottomrule
\end{tabular}
\caption{Short $P$-codes, with spectral supports as in Section~\ref{sec:inputs}.}
\label{tab:pcodes}
\end{table}

\subsection{Doubled codes}\label{sec:doubled}

Doubling can be iterated, since every doubled code is again a seed. Table~\ref{tab:best} collects
the resulting codes together with, for $p=3,5,7$ and each distance, the shortest code known to us
from any source. Generator matrices are in Appendix~\ref{app:matrices}, and every distance is
certified by an explicit minimum-weight $Z$-logical together with a lower bound from
Theorem~\ref{thm:doubling} or exhaustive search. At distance three all three entries are optimal by
Proposition~\ref{prop:shortest} (Section~\ref{sec:search}). At distance four, the
exhaustive search (Table~\ref{tab:search}) leaves a single open length for $p=5$: no code has
length at most $14$, so the concatenated $[[16,1,4]]_5$ code is shortest unless a code of length
$15$ exists. For $p=3$ and $p=7$ the gaps, between $20$ and $28$ and between $12$ and $20$, remain
open. Doubling gives the shortest known codes for $p=3$ at $d=3,4,5,6,8$, for $p=5$ at $d=3$ (a code equivalent to the cyclic one) and $d=5$, and
for $p=7$ at $d=4$; elsewhere cyclic codes or concatenations remain shorter.

\begin{table}[htbp]
\centering\small
\begin{tabular}{@{}cccc@{}}
\toprule
$d$ & $p=3$ & $p=5$ & $p=7$\\
\midrule
$2$ & $8$ (RM, cyclic) & $4$ (RS, cyclic) & $6$ (RS, cyclic)\\
$3$ & $\mathbf{15}^*$ ($1+2\times7^{\rm w}$; $19$) & $11^*$ (cyclic; $1+2\times5^{\rm w}$) & $7^*$ (cyclic)\\
$4$ & $\mathbf{28}$ ($8+2\times10$; $40$) & $16$ (concatenated) & $\mathbf{20}$ ($6+2\times7^{\rm w}$; $28$)\\
$5$ & $\mathbf{37}$ ($15+2\times11$; $79$) & $\mathbf{33}$ ($11+2\times11$; $44$) & $28$ (cyclic)\\
$6$ & $\mathbf{72}$ ($28+2\times22$; $79$) & $44$ (concatenated) & $29$ (cyclic)\\
$7$ & $79$ (cyclic) & $64$ (concatenated) & $49$ (cyclic)\\
$8$ & $\mathbf{118}$ ($72+2\times23$; $121$) & $64$ (concatenated) & $49$ (concatenated)\\
\bottomrule
\end{tabular}
\caption{Shortest known length of an $[[n,1,d']]_p$ triorthogonal code with $d'\ge d$. Bold: codes of
this paper, where $a+2\times b$ is doubling a seed of length $a$ with a $P$-code of length $b$,
${}^{\rm w}$ marks a weighted $P$-code, and the last number is the shortest length
in~\cite{PrakashCyclic}; the $[[15,1,3]]_3$ code itself is from~\cite{prakashGeneration}. $^*$: shortest possible (Proposition~\ref{prop:shortest}). Reed--Muller (RM),
Reed--Solomon (RS), cyclic and concatenated codes are from~\cite{CampbellAnwarBrowne,Krishna_2019,PrakashCyclic}.}
\label{tab:best}
\end{table}

\begin{table}[htbp]
\centering\small
\begin{tabular}{@{}lcccccc@{}}
\toprule
code & $r$ & $\lambda$ & $A_d$ & $\epsilon_{\rm out}$ & $\epsilon^*$ & $P_{\rm acc}(0.01)$\\
\midrule
$[[8,1,2]]_3$ (seed) & $2$ & $-1$ & $8$ & $2\,\epsilon^2$ & $0.211$ & $0.92$\\
$[[19,1,3]]_3$ (seed)~\cite{PrakashCyclic} & $5$ & $-1$ & $42$ & $5.25\,\epsilon^3$ & $0.226$ & $0.83$\\
$[[15,1,3]]_3$ & $4$ & $1$ & $50$ & $6.25\,\epsilon^3$ & $0.230$ & $0.86$\\
$[[28,1,4]]_3$ & $7$ & $-1$ & $230$ & $14.4\,\epsilon^4$ & $0.212$ & $0.75$\\
$[[40,1,4]]_3$~\cite{PrakashCyclic} & $10$ & $1$ & $260$ & $16.25\,\epsilon^4$ & $0.223$ & $0.67$\\
$[[37,1,5]]_3$ & $10$ & $1$ & $2002$ & $62.6\,\epsilon^5$ & $0.230$ & $0.69$\\
$[[72,1,6]]_3$ & $18$ & $-1$ & $5060$ & $79.1\,\epsilon^6$ & $0.217$ & $0.48$\\
$[[118,1,8]]_3$ & $30$ & $-1$ & $245410$ & $959\,\epsilon^8$ & -- & --\\
\midrule
$[[11,1,3]]_5$ (seed)~\cite{PrakashCyclic} & $3$ & $-1$ & $92$ & $1.44\,\epsilon^3$ & $0.346$ & $0.90$\\
$[[33,1,5]]_5$ & $9$ & $-1$ & $2552$ & $2.49\,\epsilon^5$ & $0.334$ & $0.72$\\
\midrule
$[[6,1,2]]_7$ (seed)~\cite{Krishna_2019} & $1$ & $-1$ & $90$ & $2.5\,\epsilon^2$ & $0.232$ & $0.94$\\
$[[20,1,4]]_7$ & $5$ & $-1$ & $1194$ & $0.921\,\epsilon^4$ & $0.376$ & $0.82$\\
\bottomrule
\end{tabular}
\caption{Properties of the seeds and doubled codes. $r$ is the number of stabilizer rows of the triorthogonal matrix $H$, that is, the number of
rows other than the logical row; $\lambda$ is the logical exponent, modulo $p$; $A_d$ is the number of
$Z$-logicals of minimum weight, counted as vectors; and $\epsilon_{\rm out}$ is the leading term of
the output error. Thresholds and acceptance probabilities are computed exactly from weight
enumerators, using the MacWilliams identity~\cite{macwilliams77}, as functions of the infidelity
$\epsilon$. For $[[118,1,8]]_3$, with $30$ stabilizer rows, the weight enumerators are too
expensive to compute, so its threshold and acceptance probability are left blank, and $A_8$ is
obtained as described in the text. The cyclic $[[40,1,4]]_3$ code
of~\cite{PrakashCyclic} is included for comparison.}
\label{tab:perf}
\end{table}

Table~\ref{tab:perf} lists the seeds and doubled codes with their distillation properties. To
leading order the output error is $\epsilon_{\rm out}\approx A_d\,\epsilon^d/(p-1)^d$, because each
nonzero error pattern on $d$ given qudits occurs with probability $(\epsilon/(p-1))^d$. For example,
$[[15,1,3]]_3$ has $25$ supports of weight three, each carrying $2$ of the $8$ nonzero error patterns
as logicals, so $A_3=50$ and $\epsilon_{\rm out}\approx50(\epsilon/2)^3=6.25\,\epsilon^3$.

For $[[118,1,8]]_3$ the weight enumerators are out of reach, but $A_d$ follows from the proof of
Theorem~\ref{thm:doubling}. A $Z$-logical of weight $d=\min(d_P,d_T+2)$ is either a minimum-weight $Z$-logical
$v$ of the $P$-code split between the two copies, with the part on the second copy summing to zero,
or a minimum-weight $Z$-logical of the seed together with equal single entries at the same position of
both copies. If no vector $ae_i-e_j$ with $i\ne j$ lies in $S^\perp$, as for all codes here, this
gives
\[
A_d=\sum_v\#\Big\{J\subseteq\operatorname{supp}v:\ \textstyle\sum_{l\in J}v_l=0\Big\}+n_P\,A_{d_T}(T),
\]
where the sum runs over the minimum-weight $Z$-logicals $v$ of the $P$-code and is present only if
$d=d_P$, and the second term is present only if $d=d_T+2$. This reproduces $A_d$ for every other
doubled code in Table~\ref{tab:perf}, and gives $A_8=129030+23\cdot5060=245410$ for
$[[118,1,8]]_3$.

The new codes are shorter at equal distance, with comparable thresholds:
$[[28,1,4]]_3$ is $30\%$ shorter than $[[40,1,4]]_3$, with a slightly smaller leading coefficient
and a slightly lower threshold, and $[[15,1,3]]_3$ is shorter than $[[19,1,3]]_3$, with a somewhat
larger leading coefficient and an essentially equal threshold.

For $p=3$ the transversal gate implements $U^\lambda$ in general only up to a Pauli
correction~\cite{PrakashCyclic}. The cube sum $\Phi(x)=\sum_l\mathbf c_lx_l^3\bmod9$ is well
defined on the stabilizer space, because $(x+3k)^3\equiv x^3\pmod9$, and no correction is needed
exactly when $\Phi$ vanishes on a basis of stabilizers~\cite{prakashGeneration}. By this
criterion $[[15,1,3]]_3$ and $[[28,1,4]]_3$ are exact, implementing $U$ and $U^{-1}$, as are the
seeds $[[8,1,2]]_3$ and $[[19,1,3]]_3$, while $[[37,1,5]]_3$, $[[72,1,6]]_3$ and
$[[118,1,8]]_3$ need a fixed Pauli correction. For $[[15,1,3]]_3$ exactness also follows from its
lift to an integral triorthogonal lattice~\cite{prakashGeneration}.

\subsection{Codes with several outputs}\label{sec:multi}

Each code above is a triorthogonal space of length $n_W=n+1$ punctured at one coordinate
(Section~\ref{sec:points}), and the space can also be punctured at $k$ coordinates. If the $k$
punctured columns are independent and the logical rows are chosen dual to them, the result is an
$[[n_W-k,k,d]]_p$ code on which the transversal gate acts as a product of single-qudit gates
$U^{-\mathbf c_s}$, so each round yields $k$ magic states. Table~\ref{tab:multi} lists the largest $k$ found by a randomized search for $d\ge2$
and $d\ge3$; all distances are exact. At distance two these codes do not compete with the
$[[9m-k,k,2]]_3$ codes of~\cite{Prakash:2025azi}. At distance three they improve slightly on
$[[15,1,3]]_3$, with yield parameter $\log_d(n/k)$~\cite{Bravyi_2012}, equal to $2.37$ for $[[27,2,3]]_3$ and $2.24$ for
$[[35,3,3]]_3$, against $2.46$.

No puncture set with $k\ge2$ gave $d\ge4$. For puncture sets that include a coordinate of the
two copies there is a structural reason. In a doubled space the two copies
of column $l$ of the $P$-code, $(-\kappa,s_l,0,0)$ and $(\kappa,-s_l,\kappa,0)$, sum to the same vector
for every $l$, so puncturing a coordinate of the copies leaves a $Z$-logical of weight three. 

\begin{table}[htbp]
\centering\small
\begin{tabular}{@{}lccc@{}}
\toprule
space of & length, dimension & $d\ge2$ & $d\ge3$\\
\midrule
$[[15,1,3]]_3$ & $16,\ 5$ & $[[13,3,2]]_3$ & $k=1$\\
$[[19,1,3]]_3$ & $20,\ 6$ & $[[16,4,2]]_3$ & $k=1$\\
$[[28,1,4]]_3$ & $29,\ 8$ & $[[24,5,2]]_3$ & $[[27,2,3]]_3$\\
$[[37,1,5]]_3$ & $38,\ 11$ & $[[30,8,2]]_3$ & $[[35,3,3]]_3$\\
$[[72,1,6]]_3$ & $73,\ 19$ & $[[58,15,2]]_3$ & $[[70,3,3]]_3$\\
$[[118,1,8]]_3$ & $119,\ 31$ & $[[93,26,2]]_3$ & $[[115,4,3]]_3$\\
$[[33,1,5]]_5$ & $34,\ 10$ & $[[25,9,2]]_5$ & $[[31,3,3]]_5$\\
$[[20,1,4]]_7$ & $21,\ 6$ & $[[16,5,2]]_7$ & $[[19,2,3]]_7$\\
\bottomrule
\end{tabular}
\caption{Multi-output codes obtained by puncturing the triorthogonal space of each code at $k$
coordinates, with the largest $k$ found for $d\ge2$ and $d\ge3$. The puncture sets were found by a
randomized search, so these are achievable rather than optimal parameters.}
\label{tab:multi}
\end{table}

% =============================================================================
\section{Shortest codes: exhaustive search}\label{sec:search}
% =============================================================================

We determined the shortest triorthogonal codes of distance three for $p=5,7$ by exhaustive
search, extending the $p=3$ result of~\cite{prakashGeneration}, which our search reproduces, and
searched for short codes of distance four. For qubits, triorthogonal
codes have been classified exhaustively at small lengths~\cite{Nezami_2022,WillsJainSingh2026}.
The method, a reduction to a normal form followed by a backtracking search, is described in
Appendix~\ref{app:search}, together with its validation.

\begin{proposition}\label{prop:shortest}
The shortest triorthogonal $[[n,k,d]]_p$ codes with $d\ge3$ have length $15$ for $p=3$, $11$ for
$p=5$ and $7$ for $p=7$, with or without non-unit coefficients. In particular the cyclic $[[11,1,3]]_5$
and $[[7,1,3]]_7$ codes of~\cite{PrakashCyclic} are as short as possible, and $[[15,1,3]]_3$ is the
unique shortest qutrit code~\cite{prakashGeneration}.
\end{proposition}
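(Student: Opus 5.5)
The proof has two halves: existence and nonexistence. Existence is already in hand. For $p=3$ there is the $[[15,1,3]]_3$ code of Example~\ref{ex:15}. For $p=5$ there is the cyclic $[[11,1,3]]_5$ code, equivalently the doubling of the trivial seed with the weighted $[[5,1,3]]^P_5$ code of Example~\ref{ex:5}. For $p=7$ there is the cyclic $[[7,1,3]]_7$ code of~\cite{PrakashCyclic}. Each has its distance certified by an explicit minimum-weight $Z$-logical together with the lower bound from Theorem~\ref{thm:doubling} or a direct check. What remains is to show that no triorthogonal code with $d\ge3$ exists at any shorter length, for any $k\ge1$ and any coefficients $\mathbf c\in(\F_p^\times)^n$.

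For nonexistence I will first reduce to $k=1$. Puncturing $k-1$ of the $k$ logical coordinates of the underlying triorthogonal space does not increase the length bound we need. More simply, discarding all but one logical row turns the other logical rows into stabilizer rows. This keeps the code triorthogonal, since the extra $\tau$ values vanish on a $k$-output code, and it does not decrease the distance, since it only shrinks $\mathcal A^\perp$. I will check this carefully, because for $k\ge2$ the conditions $\tau(y^i,y^i,y^i)\ne0$ with mixed triples vanishing are exactly what is needed.

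Next I pass to the geometric normal form of Section~\ref{sec:points}. By Lemma~\ref{lem:moments}, a code is a multiset of charged points $x_l\in\F_p^r$, with charges $\chi_l=\mathbf c_ly_l^3$, whose moments up to octupole order equal those of a point charge $\lambda$ at the origin. Columns with $y_l=0$ are handled in the homogeneous form of the footnote. By Lemma~\ref{lem:affinehull}, $d\ge3$ says that no point is at the origin, no two points are collinear with it, and coincident points merge. After merging, the data are a set of distinct projective directions, each carrying at most one point per direction and a nonzero charge. The group $GL_r(\F_p)$, together with the choice of affine chart and column rescalings (which change the charges by cubes), acts on this data. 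I will quotient by it by putting the first $r$ independent columns in identity form and enumerating the remaining columns lexicographically. The conditions to impose are the vanishing of the moments $\sum\chi_l f(x_l)$ with $\deg f\le3$, which are linear in the charges. So for each candidate point configuration, existence of valid charges is a linear-algebra question: is there a solution with all $\chi_l\neq0$ and total charge $\lambda\ne0$? Only cube classes of $\chi_l$ matter when $p\equiv1\pmod 3$.

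The search is a backtracking over point sets, pruned in three ways. The first is the bound $r\le\lfloor(n-1)/2\rfloor$-type counting: the octupole conditions give many linear equations in few unknowns, and I expect something like $n\ge 2r+1$ here as well. The second is the rank of the moment system, which prunes as soon as the only charge solutions force some $\chi_l=0$. The third is the $d\ge3$ constraint on directions. The main obstacle is making this exhaustive search feasible and trustworthy at $p=5$, where lengths up to $10$ and dimensions $r$ up to about $4$ must be covered in $\F_5^r$. I would first use the weighted-$P$-code decomposition heuristically to order the search, but rely only on the normal form for completeness. I would validate the search by reproducing the known uniqueness of $[[15,1,3]]_3$ from~\cite{prakashGeneration} and the existence of the $[[11,1,3]]_5$ and $[[7,1,3]]_7$ codes at the claimed lengths. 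For $p=7$ the range is small ($n\le6$), and I expect the case analysis to be almost by hand.
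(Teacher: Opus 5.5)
Your plan follows the paper's proof in Appendix~\ref{app:search}, but one step fails as written. The paper likewise reduces to $k=1$. It merges parallel columns of $H$ into a single column with coefficient $c_l+a^3c_{l'}$, deleting the column if this vanishes (a case you should also allow for) and re-choosing a basis of stabilizers. It notes that $d\ge3$ is then equivalent to the stabilizer matrix being projective, writes $A=(I_r\,|\,B)$, $h=(0\,|\,q)$, and runs a validated backtracking search. Your charged points, with points at infinity in homogeneous form, carry the same data.

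The failing step is the reduction to $k=1$. If you keep $y^2,\dots,y^k$ as stabilizer rows, the result is not triorthogonal. Definition~\ref{def:tri} requires $\tau$ to vanish on every triple of stabilizer rows, whereas $\tau(y^i,y^i,y^i)\neq0$ for each logical row of a $k$-output code. These cubes are exactly the values that do not vanish. Your distance argument is valid for that operation, but it is the wrong operation. You must delete those rows instead. Then $\mathcal A$ is unchanged and the row space shrinks to $\mathcal H'\subseteq\mathcal H$, so $\mathcal A^\perp\setminus\mathcal H'^\perp\subseteq\mathcal A^\perp\setminus\mathcal H^\perp$ and the distance cannot drop. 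Your first idea, re-puncturing the underlying space at one coordinate, lengthens the code and does not help.

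Two further remarks. First, the bound $n\ge2r+1$, which you only expect, has a one-line proof. With $x_i=(e_i\,|\,b_i)$, the vectors $c*x_i*x_i$ lie in $H^\perp$, since $\tau(x_i,x_i,h')=0$ for every row $h'$. They are independent because of the identity block, so $r\le n-r-1$. Combined with projectivity this gives $r+1\le n_B\le(p^r-1)/(p-1)-r$. For $p=7$ and $n\le6$ this leaves only $r=2$, $n_B\in\{3,4\}$, which is what makes your ``almost by hand'' claim plausible.

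Second, your search runs in the transposed direction. You enumerate point sets and then solve the moment equations, which are linear in the charges. But each added point adds an unknown, so rank-based rejection is possible only at the leaves. Requiring all charges to be nonzero is also a covering check on the solution space, not pure linear algebra. At $p=5$, $r=4$, each of up to six columns of $B$ ranges over the $152$ admissible directions of $\mathrm{PG}(3,5)$, plus a position. The paper instead fixes $w$ and $q$ and backtracks over the rows $b_1,\dots,b_r$. Each of (a)--(f) involves only rows already chosen and is linear, quadratic or cubic (mostly linear) in the new row. Each is checked as soon as its last row is placed, so pruning is immediate. This is what makes the $p=5$ case you identify as the main obstacle tractable.
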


At distance four the search is not complete, but for $p=5$ it excludes every length up to $14$,
so the shortest $[[n,1,4]]_5$ code has length $15$ or $16$.

\begin{table}[htbp]
\centering\small
\begin{tabular}{@{}llll@{}}
\toprule
$p$ & coefficients & $d\ge3$ & $d\ge4$\\
\midrule
$3$ & $1$ (suffices) & none for $n\le14$; $n=15$ (unique) & none for $n\le19$\\
$5$ & $1$ (suffices) & none for $n\le10$; $n=11$ & none for $n\le14$\\
$7$ & arbitrary & none for $n\le6$; $n=7$ & none for $n\le11$\\
$7$ & $\pm1$, length $=$ cost & none for $n\le6$; $n=7$ & none for $n\le13$\\
\bottomrule
\end{tabular}
\caption{Results of the exhaustive search. For $p=3,5$ unit coefficients suffice. In the last row
only unit coefficients are allowed, and length counts qudits, so that a coefficient in the class $\pm k$
costs $k$ qudits (Appendix~\ref{app:search}). For $d\ge4$ the table lists the lengths up to which
the search is complete; longer lengths were not searched to completion.}
\label{tab:search}
\end{table}

% =============================================================================
\section{Distillation sequences}\label{sec:sequences}
% =============================================================================

The codes above have distance at least three but rate $1/n$. The
$[[9m-k,k,2]]_3$ codes of~\cite{Prakash:2025azi} have distance two but high rate, and we combine them, as the 15-qubit code is combined with higher-rate codes for
qubits~\cite{Bravyi_2012,meier2012magicstatedistillationfourqubitcode}. We follow the model
of~\cite[Section~5]{Prakash:2025azi}: Clifford operations are perfect and free, inputs carry twirled
depolarizing noise of strength $\delta$, each round may use a different code, there are at most
five rounds, and the cost of a sequence is
\[
C=\prod_i\frac{n_i}{k_i\,P_{{\rm acc},i}},
\]
where $P_{{\rm acc},i}$ is the acceptance probability of round $i$. As usual, the outputs of a multi-output
round are treated as independent, which can be arranged by never feeding two outputs of the same
block into one later block~\cite{meier2012magicstatedistillationfourqubitcode}. We compute the
output error and acceptance probability of every qutrit round exactly from weight enumerators.
We compare three sets of codes: the codes of~\cite{Prakash:2025azi} with $k=3m-2$ and
$m\le12$;\footnote{The output error of these codes, for $k>1$, is slightly smaller than estimated
in~\cite{Prakash:2025azi}, so their cost was slightly overestimated there; we use exact values
throughout.}
the same codes together with $[[15,1,3]]_3$, $[[28,1,4]]_3$ and $[[37,1,5]]_3$; and, for reference,
the qubit 15-to-1~\cite{MSD}, 10-to-2~\cite{meier2012magicstatedistillationfourqubitcode}, $[[3k+8,k,2]]$ codes with even $k\le40$~\cite{Bravyi_2012}, and a doubled $[[49,1,5]]$ code~\cite{BravyiCross2015}. For the $[[3k+8,k,2]]$ codes we use the leading-order output
error $(1+3k)\epsilon^2$ of~\cite{Bravyi_2012}, which we checked against an exact computation for
$k=2$.

The 15-qutrit and 15-qubit codes both take fifteen inputs to one
output at distance three, so they have the same yield parameter and differ only in constants. In the two conventions of Section~\ref{sec:magic}, at equal infidelity $\epsilon$, the qutrit code has
$\epsilon_{\rm out}\approx6.25\,\epsilon^3$ against $35\,\epsilon^3$, because on each of its $25$
weight-three logical supports only $2$ of the $8$ nonzero error patterns are logical; its threshold is
$\epsilon^*=0.230$ against $0.141$, and the two codes accept with the same probability, $1-15\epsilon$,
to leading order. At equal $\delta$, which is the more demanding comparison for qudits,
these become $2.78\,\delta^3$ against $8.75\,\delta^3$ and $\delta^*=0.345$
against $0.283$, with nearly equal acceptance probabilities ($0.905$ and $0.928$ at
$\delta=0.01$). In either convention the qutrit code performs about as well as or better than the
15-qubit code~\cite{MSD}.

\begin{figure}[t]
\centering
\includegraphics[width=\linewidth]{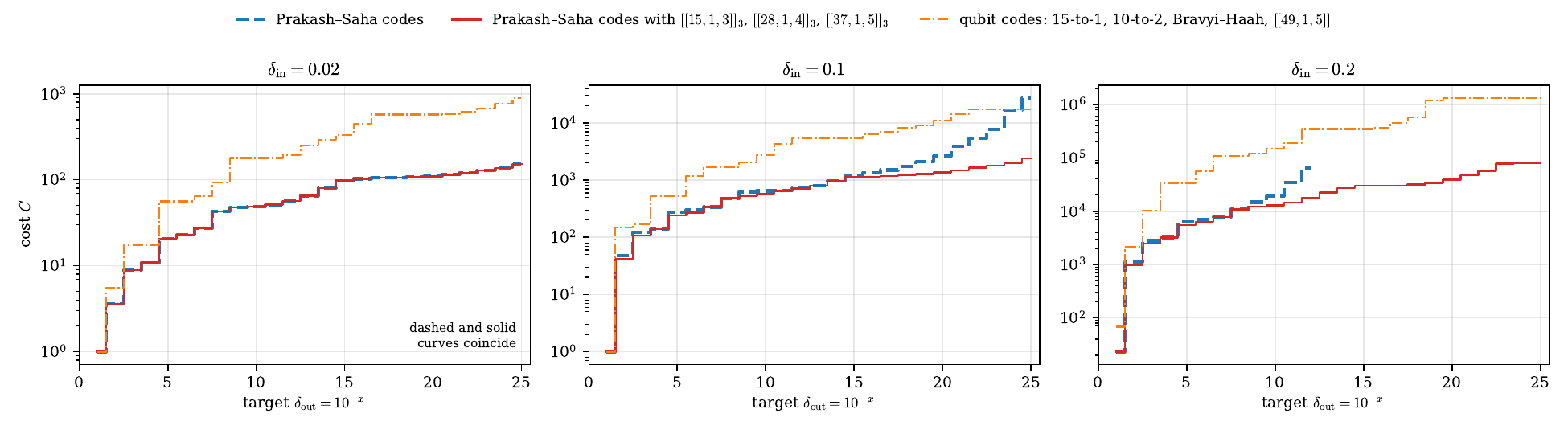}
\caption{Cost $C$ of the cheapest sequence of at most five rounds reaching
$\delta_{\rm out}\le10^{-x}$, for three input noise levels. Dashed: the Prakash--Saha codes~\cite{Prakash:2025azi}; solid: these together with $[[15,1,3]]_3$, $[[28,1,4]]_3$ and
$[[37,1,5]]_3$; dash-dotted: qubit codes. For $\delta_{\rm in}=0.02$ the dashed and solid curves
coincide. A missing segment means that the target is not reached within five rounds.}
\label{fig:cost}
\end{figure}

\begin{table}[t]
\centering\small\setlength{\tabcolsep}{5pt}
\begin{tabular}{@{}cccccl@{}}
\toprule
$\delta_{\rm in}$ & target & codes of~\cite{Prakash:2025azi} & $+$ new codes & qubits & sequence with new codes\\
\midrule
$0.02$ & $10^{-10}$ & $49$ & $49$ & $179$ & new codes not used\\
$0.02$ & $10^{-15}$ & $98$ & $98$ & $334$ & new codes not used\\
$0.02$ & $10^{-20}$ & $110$ & $110$ & $574$ & new codes not used\\
$0.02$ & $10^{-25}$ & $153$ & $153$ & $892$ & new codes not used\\
\midrule
$0.1$ & $10^{-10}$ & $648$ & $567$ & $2708$ & $15$, $26$, $68$, $68$\\
$0.1$ & $10^{-15}$ & $1182$ & $1140$ & $5486$ & $15$, $50$, $74$, $74$, $74$\\
$0.1$ & $10^{-20}$ & $2629$ & $1368$ & $10968$ & $15$, $20$, $50$, $74$, $74$\\
$0.1$ & $10^{-25}$ & $27194$ & $2407$ & $17101$ & $15$, $14$, $20$, $32$, $56$\\
\midrule
$0.2$ & $10^{-10}$ & $19139$ & $12978$ & $148654$ & $8$, $15$, $26$, $68$, $74$\\
$0.2$ & $10^{-15}$ & --- & $30050$ & $348392$ & $15$, $15$, $74$, $74$, $74$\\
$0.2$ & $10^{-20}$ & --- & $39158$ & $1323317$ & $15$, $15$, $26$, $38$, $68$\\
$0.2$ & $10^{-25}$ & --- & $81050$ & $1323317$ & $15$, $28$, $74$, $74$, $74$\\
\bottomrule
\end{tabular}
\caption{Costs of the optimal sequences of Figure~\ref{fig:cost}, computed exactly. Sequences list the
code lengths round by round; $15$ and $28$ are $[[15,1,3]]_3$ and $[[28,1,4]]_3$, and the others are the
Prakash--Saha codes $[[6m+2,3m-2,2]]_3$, which are determined by their length. A dash means that the target is not reached within five
rounds; equal costs mean that the same sequence reaches both targets.}
\label{tab:cost}
\end{table}

Figure~\ref{fig:cost} and Table~\ref{tab:cost} show the results. At
$\delta_{\rm in}=0.02$, the starting point of~\cite{Prakash:2025azi}, the high-rate distance-two
codes are already so efficient that the optimal sequences do not use the new codes, whose yield
parameter $\log15/\log3\approx2.46$ is larger. At higher input noise the optimal sequences use one or
two early rounds of $[[15,1,3]]_3$, sometimes followed by $[[28,1,4]]_3$, and continue with
high-rate codes: the new codes play the role that the 15-qubit code plays for qubits. The gain
grows with the target. At $\delta_{\rm in}=0.1$ the cost falls by $12\%$ at $10^{-10}$ and by a
factor of eleven at $10^{-25}$, and at $\delta_{\rm in}=0.2$ the distance-two codes alone do not
reach $10^{-15}$ within five rounds. In this model, qutrit sequences are cheaper than qubit sequences at all
three noise levels. The difference is already large at low input noise, where it is due to the
Prakash--Saha codes alone: at $\delta_{\rm in}=0.02$ they need between $3.4$ and $5.8$ times fewer
noisy inputs than the qubit codes, for example $153$ noisy qutrits against $892$ noisy qubits per
output at $10^{-25}$. This compares input counts at equal depolarizing strength, with the qubit
codes listed above. Noisy qutrit and qubit magic states need not cost the same to prepare, the
distilled gates differ, and qubit protocols with lower yield
parameter~\cite{Haah_2018,Hastings_2018} would narrow the gap.

% =============================================================================
\section{Conclusion}\label{sec:conclusion}
% =============================================================================

Doubling extends to every odd prime, with self-orthogonality required only of the $P$-code. A geometric picture of charge distributions  explains why doubling works and motivates the introduction of weighted $P$-codes. The resulting codes
lower the cost of qutrit distillation at high input noise, but for $p=5$ and $p=7$ several
shortest known codes remain cyclic or concatenated.

Better $P$-codes would immediately give better doubled codes. For $p=5,7$ this means weighted
$P$-codes beyond the arc bound $n\le p$, where our searches were only random; for $p=3$, codes derived from the extremal ternary self-dual codes of length
$36$ and $48$~\cite{macwilliams77}, for example by puncturing, are promising
candidates. It would also be interesting to extend doubling to codes with $k>1$ and, as for
qubits~\cite{Haah_2018}, to multi-qudit gates such as $CCZ$, and to include the cost of Clifford
operations in the comparison of distillation sequences.

\section*{Acknowledgements}
This work was supported by a DST-SERB grant (CRG/2021/009137) and by the DST-WISE KIRAN Ph.D.
fellowship (grant DST/\allowbreak WISE-PhD/\allowbreak PM/\allowbreak 2023/\allowbreak 32(G)). We thank T. Saha for sharing his unpublished notes on qudit CSS-T codes~\cite{SahaNotes}, in
which he showed, independently of~\cite{PrakashCyclic}, that for $p\ge5$ triorthogonality
without self-orthogonality suffices for a transversal non-Clifford gate. SP thanks Prof. P. S. Satsangi for discussions on qudit magic state distillation several years ago that inspired this work~\cite{DPS}.

We used large language models (Gemini 3.1, 3.6; Claude Sonnet 5.5 and Opus 5.5) to assist with
writing code, drafting text, and revision. We checked all results and take full responsibility for the content of this paper.

\section*{Code and data availability}
Code and data are available at
\url{https://github.com/spdei/qudit-triorthogonal-codes-from-doubling}. They include the
generator matrices of the doubled codes and their inputs in machine-readable form, scripts
that verify their triorthogonality and distances and compute the distillation results, and the
exhaustive-search program with the logs of its runs.

\bibliographystyle{quantum2}
\bibliography{qudit2026}

\appendix
\section{Tensor-product constructions}\label{app:blowup}

Doubling is the simplest member of a family of constructions in which every coordinate of a
small matrix is replaced by a copy of a $P$-code. This appendix defines the family
(Theorem~\ref{thm:blowup}), shows that doubling belongs to it, and shows that no other member
adds less length per unit of distance gained (Proposition~\ref{prop:optimal}).
Example~\ref{ex:tripling} gives the parameters of codes from two other members, for comparison
with doubling. The construction is reminiscent of, but distinct from, the inner and outer codes
of~\cite{Haah_2017}; as in the rest of the paper, we consider only codes encoding a
single qudit with large distance.

The construction has three ingredients.
\begin{itemize}
\item A $P$-code $S\subseteq\F_p^{n_P}$ with distance $d_P$.
\item A seed $T=(y;T_0)$ of length $n_T$ with distance $d_T$.
\item \emph{Inner data} on $m$ coordinates, consisting of block coefficients
$\mathbf c\in(\F_p^\times)^m$, a subspace $E\subseteq\F_p^m$, a \emph{logical vector}
$\rho\in\F_p^m$ and a \emph{linking vector} $u\in\F_p^m$.
\end{itemize}
For $a\in\F_p^m$ and $x\in\F_p^{n_P}$, let $a\otimes x\in\F_p^{mn_P}$ denote the vector whose
$i$-th block of $n_P$ coordinates is $a_ix$. The coordinates in block $i$ carry coefficient
$\mathbf c_i$, and the seed coordinates carry the seed's coefficients. The \emph{tensor-product construction}
$\mathcal B(S,T;E,\rho,u)$ is the matrix on $mn_P+n_T$ coordinates with the following rows:
\[
\begin{array}{lll}
\text{logical:} & (\rho\otimes\one,\ 0), &\\
\text{stabilizers:} & (\nu\otimes s,\ 0), & \nu\text{ in a basis of }E,\ s\text{ in a basis of }S,\\
& (u\otimes\one,\ y), &\\
& (0,\ t), & t\text{ in a basis of }T_0 .
\end{array}
\]
So each block is a copy of the $P$-code, the subspace $E$ says how the copies of $S$ are
combined, $\rho$ says how the logical row is spread over the blocks, and $u$ ties the blocks to
the logical row of the seed. For a weighted $P$-code the same holds with coefficients
$\mathbf c_i\mu_l$ in block $i$ and $n_P$ replaced by the total charge $N$ below.

\begin{theorem}[Tensor-product construction]\label{thm:blowup}
Suppose that
\begin{enumerate}
\item[(i)] $E$ is a triorthogonal space for $\tau_{\mathbf c}$;
\item[(ii)] $\tau_{\mathbf c}(\rho,\rho,u)=\tau_{\mathbf c}(\rho,u,u)=0$ and
$n_P\tau_{\mathbf c}(u,u,u)+\lambda_T=0$;
\item[(iii)] $\lambda:=n_P\tau_{\mathbf c}(\rho,\rho,\rho)\neq0$; and
\item[(iv)] the rows are linearly independent.
\end{enumerate}
Then $\mathcal B(S,T;E,\rho,u)$ is triorthogonal with logical exponent $\lambda$. Its distance $d$
satisfies $\min(d_P,d_{\rm in})\le d\le d_{\rm in}$, where
\[
d_{\rm in}=\min\Big\{\wt(\sigma)+\wt(z):\ \sigma\in E^\perp,\ z\in T_0^\perp,\
u\cdot\sigma+y\cdot z=0,\ \rho\cdot\sigma\neq0\Big\}.
\]
If some block $i$ has $\rho_i\neq0$ and $u_i=0$, then $d=\min(d_P,d_{\rm in})$.
\end{theorem}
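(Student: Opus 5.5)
The plan is to use the tensor structure of the rows: every row has the form $(a\otimes x,\,w)$ with $a\in\F_p^m$, $x\in\{\one\}\cup S$ and $w$ in the seed block. The key identity is
\[
\tau\big((a\otimes x,0),(b\otimes x',0),(c\otimes x'',0)\big)=\tau_{\mathbf c}(a,b,c)\,\textstyle\sum_l x_lx'_lx''_l
\]
on the first $mn_P$ coordinates (with $\mu_l$ inserted in the weighted case). The seed block adds the seed's own $\tau$ of the $w$-parts. Triorthogonality then follows by sorting triples according to how many of the $x$'s lie in $S$, in the same way as the proof of Theorem~\ref{thm:doubling}.

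\emph{Triorthogonality.} There are four cases.
\begin{itemize}
\item If exactly one or exactly two of $x,x',x''$ lie in $S$ and the rest equal $\one$, the $P$-code factor is $\sum_ls_l=0$ or $\beta(s,s')=0$ by Definition~\ref{def:pcode}. The seed part vanishes, because rows $(\nu\otimes s,0)$ have zero seed part.
\item If all three lie in $S$, the factor is $\tau_{\mathbf c}(\nu,\nu',\nu'')=0$ by (i).
\item If none lies in $S$, the rows are the logical row and the linking row. The triples $\rho\rho u$ and $\rho uu$ give $n_P\tau_{\mathbf c}(\rho,\rho,u)=0$ and $n_P\tau_{\mathbf c}(\rho,u,u)=0$, with no seed contribution since the logical row is zero there. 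The triple $uuu$ gives $n_P\tau_{\mathbf c}(u,u,u)+\lambda_T=0$ by (ii). The triple $\rho\rho\rho$ gives $\lambda\neq0$ by (iii).
\item Triples containing a row $(0,t)$ only see seed coordinates, where the other rows are $0$ or $y$, so they vanish by triorthogonality of $T$.
\end{itemize}
Independence is hypothesis (iv).

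\emph{Distance.} Write $Z=(z_1,\dots,z_m,z_C)$ with $z_i\in\F_p^{n_P}$, and let $\sigma_i=\sum z_i$. The commutation conditions are:
\begin{itemize}
\item $x_\nu:=\sum_i\nu_iz_i\in S^\perp$ for every $\nu\in E$;
\item $z_C\in T_0^\perp$;
\item $u\cdot\sigma+y\cdot z_C=0$.
\end{itemize}
Nontriviality is $\rho\cdot\sigma\neq0$.

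For the lower bound I split into two cases. If $\nu\cdot\sigma\neq0$ for some $\nu\in E$, then $x_\nu$ has nonzero coordinate sum and lies in $S^\perp$, so it is a $Z$-logical of the $P$-code. Its support lies in the set of columns $l$ where some $z_{i,l}\neq0$, so $\wt(Z)\ge d_P$. Otherwise $\sigma\in E^\perp$, and $(\sigma,z_C)$ satisfies all the conditions defining $d_{\rm in}$. Since $\sigma_i\neq0$ forces $z_i\neq0$, we get $\wt(Z)\ge\wt(\sigma)+\wt(z_C)\ge d_{\rm in}$.

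For the upper bound $d\le d_{\rm in}$, take a minimizer $(\sigma,z)$ and concentrate each block on one coordinate: set $z_i=\sigma_ie_1$ and $z_C=z$. Then $x_\nu=(\nu\cdot\sigma)e_1=0$, the linking and logical conditions are inherited, and the weight is exactly $\wt(\sigma)+\wt(z)$.

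Finally, if $\rho_i\neq0$ and $u_i=0$, place a minimum-weight $Z$-logical $v$ of the $P$-code in block $i$ alone. The stabilizer conditions hold because $\nu_i\beta(s,v)=0$ for all $s\in S$, and because $u_i\sum v=0$ in the linking condition. The logical pairing is $\rho_i\sum v\neq0$. This gives $d\le d_P$, hence $d=\min(d_P,d_{\rm in})$.

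The main obstacle is the bookkeeping in the lower bound: I must correctly identify when $\sigma\notin E^\perp$ yields a genuine $P$-code logical, and check that the weight of $x_\nu$ is bounded by the number of active columns rather than by the total weight of $Z$. Both points are handled by the support argument above. In the weighted case I must also check that the pairing with $\one$ uses $\mu$ consistently, with $n_P$ replaced by $N$ throughout.
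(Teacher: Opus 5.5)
Your proposal is correct and follows the paper's proof essentially step by step. It uses the same factorization $\tau(a\otimes x,b\otimes x',e\otimes x'')=\tau_{\mathbf c}(a,b,e)\,\tau(x,x',x'')$ with the same case split for triorthogonality. The distance argument has the same two cases (some $\nu\in E$ with $\nu\cdot\sigma\neq0$ yields a $P$-code logical, otherwise $\sigma\in E^\perp$) and the same witnesses: $(\sigma\otimes e_1,z)$, and a $P$-code logical placed alone in a block with $\rho_i\neq0=u_i$.
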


\begin{proof}
\emph{Triorthogonality.} For $a,b,e\in\F_p^m$ and $x,x',x''\in\F_p^{n_P}$ the form $\tau$
factorizes:
\[
\tau\big(a\otimes x,\ b\otimes x',\ e\otimes x''\big)=\tau_{\mathbf c}(a,b,e)\,\tau(x,x',x'').
\]
The $P$-code identities $\tau(s,s',\one)=\beta(s,s')=0$, $\tau(s,\one,\one)=\sum_ls_l=0$ and
$\tau(\one,\one,\one)=n_P$ then dispose of the triples as follows.
\begin{itemize}
\item Three rows from $E\otimes S$: the factor $\tau_{\mathbf c}(\nu,\nu',\nu'')$ vanishes by~(i).
\item Two rows from $E\otimes S$ and one from $\{\rho,u\}\otimes\one$: the factor $\beta(s,s')$
vanishes.
\item One row from $E\otimes S$ and two from $\{\rho,u\}\otimes\one$: the factor $\sum_ls_l$
vanishes.
\item A row $(0,t)$ together with any rows: the only contributions come from the seed
coordinates. There the other rows restrict to $0$ or $y$, and $\tau(t,\cdot,\cdot)$ vanishes on
these by triorthogonality of $T$.
\item The remaining triples: $\tau(\rho,\rho,u)$ and $\tau(\rho,u,u)$ get no seed contribution
because $\rho$ has none, so they vanish by~(ii). The cube of the linking row is
$n_P\tau_{\mathbf c}(u,u,u)+\lambda_T=0$, again by~(ii).
\end{itemize}
The cube of the logical row is $\lambda\neq0$ by~(iii).

\emph{Distance.} Write a $Z$-type vector as $Z=(Z_1,\dots,Z_m,z)$ with $Z_i\in\F_p^{n_P}$, and
let $\sigma_i=\sum_lZ_{i,l}$. The vector $Z$ is a nontrivial $Z$-logical if and only if all of the
following hold:
\begin{itemize}
\item $\sum_i\nu_iZ_i\in S^\perp$ for every $\nu\in E$;
\item $z\in T_0^\perp$;
\item $u\cdot\sigma+y\cdot z=0$;
\item $\rho\cdot\sigma\neq0$.
\end{itemize}
There are two cases.
\begin{itemize}
\item If $\nu\cdot\sigma\neq0$ for some $\nu\in E$, then $v=\sum_i\nu_iZ_i$ lies in $S^\perp$
and $\sum_lv_l=\nu\cdot\sigma\neq0$. So $v$ is a $Z$-logical of the $P$-code, and
$\wt(Z)\ge\wt(v)\ge d_P$.
\item Otherwise $\sigma\in E^\perp$. Each nonzero $\sigma_i$ requires $Z_i\neq0$, so
$\wt(Z)\ge\wt(\sigma)+\wt(z)\ge d_{\rm in}$.
\end{itemize}
For the upper bound, take $\sigma,z$ attaining $d_{\rm in}$ and set $Z_i=\sigma_ie_1$. Then
$\sum_i\nu_iZ_i=(\nu\cdot\sigma)e_1=0$, so $Z=(\sigma\otimes e_1,z)$ is a $Z$-logical of weight
$d_{\rm in}$. If $\rho_i\neq0=u_i$, put a minimum-weight $Z$-logical $v$ of the $P$-code in block
$i$ and set $z=0$; then $\sigma=\sigma_ie_i$ with $\sigma_i\neq0$, so $u\cdot\sigma=0$ and
$\rho\cdot\sigma\neq0$, and this is a $Z$-logical of weight $d_P$.
\end{proof}

Theorem~\ref{thm:doubling} is the case
$m=2$, $\mathbf c=(\eta,\eta)$, $E=\F_p(1,-1)$, $\rho=(-\kappa,0)$ and $u=(0,\kappa)$, after
subtracting the linking row from the logical row of Theorem~\ref{thm:doubling}. Then
$E^\perp=\F_p(1,1)$, so $\sigma=(a,a)$ with $a\neq0$, which forces $y\cdot z\neq0$ and gives
$d_{\rm in}=d_T+2$.

To compare members of the family, we regard the $mn_P$ new coordinates as the \emph{cost} and the
increase of distance over $d_T$ as the \emph{gain}. Define
\[
g(E,\rho)=\min\{\wt(\sigma):\ \sigma\in E^\perp,\ \rho\cdot\sigma\neq0\},
\]
the smallest number of blocks that a $Z$-logical in the second case of the proof above, with
$\sigma\in E^\perp$, must touch. Since
$\tau_{\mathbf c}(\rho,\rho,\rho)\neq0$ while $\tau_{\mathbf c}$ vanishes on $E$, we have
$\rho\notin E$, so some $\sigma\in E^\perp$ has $\rho\cdot\sigma\ne0$ and $g$ is finite.
For doubling, $g=2$.

\begin{proposition}\label{prop:optimal}
In the setting of Theorem~\ref{thm:blowup}:
\begin{enumerate}
\item $d\le d_T+g$ and $g\le m$;
\item if $g=m$ then $m=2$;
\item if $\rho-\alpha u\notin E$ for every $\alpha\in\F_p$, then $d\le m$, whatever the seed.
\end{enumerate}
\end{proposition}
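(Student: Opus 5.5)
The plan is to prove the three items separately, using only the characterization of $d_{\rm in}$ and the explicit $Z$-logicals $(\sigma\otimes e_1,z)$ from the proof of Theorem~\ref{thm:blowup}. That proof shows that any $\sigma\in E^\perp$ and $z\in T_0^\perp$ with $u\cdot\sigma+y\cdot z=0$ and $\rho\cdot\sigma\neq0$ give a nontrivial $Z$-logical of weight $\wt(\sigma)+\wt(z)$.

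\emph{Item 1.} Take $\sigma\in E^\perp$ attaining $g$, so that $\rho\cdot\sigma\neq0$ and $\wt(\sigma)=g$.
\begin{itemize}
\item If $u\cdot\sigma=0$, put $z=0$. This gives a logical of weight $g\le d_T+g$.
\item If $u\cdot\sigma\neq0$, take a minimum-weight $Z$-logical $z_0$ of the seed. It lies in $T_0^\perp$ with $y\cdot z_0\neq0$, so we can rescale it to $z=-(u\cdot\sigma)z_0/(y\cdot z_0)$. This gives $d_{\rm in}\le g+d_T$.
\end{itemize}
In both cases $d\le d_{\rm in}\le d_T+g$. The bound $g\le m$ is trivial, because $g$ is finite (as noted before the proposition) and every vector of $\F_p^m$ has weight at most $m$.

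\emph{Item 2.} This is the main step. I would first show that $g=m$ forces $\dim E^\perp=1$. Suppose $\dim E^\perp\ge2$ and pick $\sigma\in E^\perp$ with $\rho\cdot\sigma\neq0$. The subspace $E^\perp\cap\rho^\perp$ has codimension one in $E^\perp$, so it contains a nonzero $\tau$. Choose a coordinate $i$ with $\tau_i\neq0$. Then $\sigma-(\sigma_i/\tau_i)\tau$ lies in $E^\perp$, still has nonzero product with $\rho$, and vanishes at $i$, so it has weight less than $m$. This contradicts $g=m$.

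Hence $E=\sigma^\perp$ for a single $\sigma$ of full weight, that is, $E$ is a hyperplane. If $m\ge3$, take distinct $i,j,k$ and put $w=\sigma_je_i-\sigma_ie_j$ and $w'=\sigma_ke_i-\sigma_ie_k$, both in $E$. Their supports meet only in $i$, so
\[
\tau_{\mathbf c}(w,w,w')=\mathbf c_i\sigma_j^2\sigma_k\neq0,
\]
which contradicts~(i). If $m=1$, then $E=0$ and $\rho\neq0$, so condition~(ii), $\tau_{\mathbf c}(\rho,\rho,u)=0$, forces $u=0$. The last part of~(ii) then reads $\lambda_T=0$, which is impossible for a seed. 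Therefore $m=2$.

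\emph{Item 3.} The hypothesis that $\rho-\alpha u\notin E$ for every $\alpha$ says exactly that $\rho\notin E+\F_pu$. By duality there is then some $\sigma\in(E+\F_pu)^\perp$ with $\rho\cdot\sigma\neq0$. This $\sigma$ lies in $E^\perp$ and satisfies $u\cdot\sigma=0$, so $(\sigma\otimes e_1,0)$ is a nontrivial $Z$-logical, as in the upper-bound argument of Theorem~\ref{thm:blowup}. Its weight is at most $m$, so $d\le m$ independently of the seed.

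I expect the only delicate point to be item~2, specifically excluding $m=1$ with condition~(ii) and making sure that $w$ and $w'$ lie in $E$.
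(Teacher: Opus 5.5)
Your proof is correct and follows the paper's proof item by item: the same case split on $u\cdot\sigma$ in (1), the same exclusion of $m=1$ via (ii) and reduction to $E=\sigma^\perp$ with $\sigma$ of full weight in (2), and in (3) the same duality argument, which you phrase as $\rho\notin E+\F_p u$. The only differences are local to (2). Your step of subtracting a multiple of a vector in $E^\perp\cap\rho^\perp$ to kill a coordinate is a little shorter than the paper's line argument through two full-weight vectors. Your explicit $w,w'\in E$ with $\tau_{\mathbf c}(w,w,w')=\mathbf c_i\sigma_j^2\sigma_k\neq0$ is the dual form of the paper's extraction of the coefficient of $\ell_1\otimes\ell_1\otimes\ell_2$ in $\sum_i\mathbf c_i\ell_i^{\otimes3}$.
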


\begin{proof}
(1) Take $\sigma$ attaining $g$. If $u\cdot\sigma=0$ put $z=0$. Otherwise take a
minimum-weight $Z$-logical $z$ of $T$, rescaled so that $y\cdot z=-u\cdot\sigma$. Then
$(\sigma\otimes e_1,z)$ is a $Z$-logical of weight at most $g+d_T$. The bound $g\le m$ holds
because $\sigma$ has $m$ coordinates.

(2) For $m=1$ condition~(ii) cannot hold: $\rho_1\neq0$ by~(iii), so
$\tau_{\mathbf c}(\rho,\rho,u)=\mathbf c_1\rho_1^2u_1=0$ forces $u=0$, and then
$n_P\tau_{\mathbf c}(u,u,u)+\lambda_T=\lambda_T\neq0$. So assume $m\ge2$, let $V=E^\perp$, and
suppose $g=m$, so that every vector of $V\setminus\rho^\perp$ has full weight.
\begin{itemize}
\item \emph{$V$ is one-dimensional.} If $\dim V\ge2$, then $V\setminus\rho^\perp$ contains
independent vectors $\sigma_0,\sigma_1$, both of full weight. Since
$\rho\cdot(\sigma_0+t\sigma_1)=\rho\cdot\sigma_0+t\,\rho\cdot\sigma_1$ with
$\rho\cdot\sigma_0\neq0$, the line $\sigma_0+t\sigma_1$ leaves $V\setminus\rho^\perp$ for at
most one value of $t$. Coordinate $i$ vanishes at
$t_i=-\sigma_{0,i}/\sigma_{1,i}$, and the $t_i$ take at least two values, because $\sigma_0$ and
$\sigma_1$ are not proportional. So for some $i$ the vector $\sigma_0+t_i\sigma_1$ lies in
$V\setminus\rho^\perp$ and has a zero coordinate, a contradiction.
\item \emph{$E$ is a hyperplane.} Hence $V=\F_p\sigma_0$ with $\sigma_0$ of full weight, and
$E=\sigma_0^\perp$. Let $\ell_1,\dots,\ell_m\in\F_p^{m-1}$ be the columns of a generator matrix of
$E$. Their only linear relation is $\sum_i\sigma_{0,i}\ell_i=0$, and all its coefficients are
nonzero.
\item \emph{This forces $m=2$.} Condition~(i) says that
$\sum_i\mathbf c_i\,\ell_i\otimes\ell_i\otimes\ell_i=0$ in $(\F_p^{m-1})^{\otimes3}$. If $m\ge3$,
then $\ell_1,\dots,\ell_{m-1}$ is a basis and $\ell_m=\sum_ia_i\ell_i$ with every $a_i\ne0$. The
coefficient of $\ell_1\otimes\ell_1\otimes\ell_2$ in the sum is $\mathbf c_ma_1^2a_2\neq0$,
because no other term contributes to it. This is a contradiction.
\end{itemize}

(3) On $V=E^\perp$, the linear functions $\sigma\mapsto\rho\cdot\sigma$ and
$\sigma\mapsto u\cdot\sigma$ are not proportional: if $\rho\cdot\sigma=\alpha\,u\cdot\sigma$ for
all $\sigma\in V$, then $\rho-\alpha u\in V^\perp=E$. So the kernel of
$\sigma\mapsto u\cdot\sigma$ on $V$ is not contained in that of $\sigma\mapsto\rho\cdot\sigma$,
and there is a $\sigma\in V$ with $u\cdot\sigma=0$ and $\rho\cdot\sigma\neq0$,
and $(\sigma\otimes e_1,0)$ is a $Z$-logical of weight $\wt(\sigma)\le m$.
\end{proof}

For $m\ge3$, parts~(1) and~(2) give a gain of at most $g\le m-1$ at a cost of $mn_P$, so at least
$\frac{m}{m-1}n_P>n_P$ of added length per unit of distance. Doubling, when $d_P\ge d_T+2$, gains
$2$ at a cost of $2n_P$, which is exactly $n_P$ per unit. Part~(3) says that unless $u\in\alpha^{-1}\rho+E$ for some $\alpha$, the distance is at most $m$
however good the seed is; doubling satisfies this condition, since $\rho+u=-\kappa(1,-1)\in E$. These statements compare constructions that use the same $P$-code.
They do not exclude that a construction with $m\ge3$ is shorter than iterated doubling for some
target distance, since iterated doubling may need several different $P$-codes; Example~\ref{ex:tripling}
gives an instance in which iterated doubling is shorter.

\begin{example}[Tripling and beyond]\label{ex:tripling}
Take $p=5$. For $m=3$ and $\mathbf c=(1,1,1)$, the space $E=\F_5(1,1,2)$ is triorthogonal,
since $1+1+8\equiv0$. Take $\rho=(0,0,1)$ and $u=k(1,1,0)$. Then
$\tau_{\mathbf c}(\rho,\rho,u)=\tau_{\mathbf c}(\rho,u,u)=0$, because $\rho$ and $u$ have disjoint
supports, $\tau_{\mathbf c}(u,u,u)=2k^3$, so~(ii) fixes $k$, and $\lambda=n_P\neq0$. A vector
$\sigma\in E^\perp$ with $\rho\cdot\sigma=\sigma_3\ne0$ has weight at least $2$ and
$u\cdot\sigma=k(\sigma_1+\sigma_2)=-2k\sigma_3\ne0$, so it needs a $Z$-logical of the seed, and
$d_{\rm in}=d_T+2$. Since $\rho_3\neq0=u_3$, the ``tripled'' code is a
\[
[[\,3n_P+n_T,\ 1,\ \min(d_P,d_T+2)\,]]_5
\]
code: the distance of doubling, at the cost of a third copy of the $P$-code.

For $m=5$, the space $E$ spanned by $(1,0,1,1,1)$ and $(0,1,1,2,3)$ is triorthogonal for
$\mathbf c=(1,1,2,3,4)$. Its columns are distinct points of $\mathbb P^1(\F_5)$, so every
nonzero vector of $E^\perp$ has weight at least $3$ and $g=3$. With $\rho=(0,0,0,0,1)$ and
$u=k(0,1,1,2,0)$, condition~(ii) holds for a suitable $k$, $u\in\alpha^{-1}\rho+E$, and
$d_{\rm in}=d_T+3$; since $\rho_5\neq0=u_5$ this gives a
\[
[[\,5n_P+n_T,\ 1,\ \min(d_P,d_T+3)\,]]_5
\]
code. Unlike doubling, it can raise the distance by an odd amount.

For instance, with the seed $[[11,1,3]]_5$ and the $P$-code $[[19,1,7]]^P_5$, the $m=5$
construction gives a $[[106,1,6]]_5$ code, whereas doubling $[[33,1,5]]_5$ with the same
$P$-code gives the shorter $[[71,1,7]]_5$.
\end{example}

\section{The exhaustive search}\label{app:search}

This appendix describes the search behind Proposition~\ref{prop:shortest} and
Table~\ref{tab:search}, and, at the end, the reduction used to find weighted $P$-codes for $p\ge5$.

Rescaling coordinate $l$ by $a\in\F_p^\times$ multiplies its coefficient by
$a^{-3}$ and leaves $\tau$ unchanged, so only the class of a coefficient in
$\F_p^\times/(\F_p^\times)^3$ is relevant. For $p=3,5$ cubing is a bijection and all coefficients can be
set to $1$; for $p=7$ there are three classes, with representatives $1$, $2$, and $3$.

\begin{lemma}\label{lem:reduction}
If a triorthogonal $[[n,k,d]]_p$ code with $d\ge3$ exists, then one exists with $k=1$, length at
most $n$ and distance at least $d$, whose stabilizer space and logical row are, after a permutation
of coordinates,
\[
A=(I_r\,|\,B),\qquad h=(0\,|\,q),\qquad B\in\F_p^{r\times n_B},\ q\in\F_p^{n_B},
\]
with coefficients $c_i$ on the identity columns and $w_l$ on the columns of $B$, taken from the class
representatives, such that
\begin{enumerate}
\item[(a)] $c_i+\sum_lw_lB_{il}^3=0$ for all $i$;
\item[(b)] $\sum_lw_lB_{il}^2B_{jl}=0$ for all $i\ne j$;
\item[(c)] $\sum_lw_lB_{il}B_{jl}B_{kl}=0$ for all distinct $i,j,k$;
\item[(d)] $\lambda=\sum_lw_lq_l^3\ne0$;
\item[(e)] $\sum_lw_lq_l^2B_{il}=0$ for all $i$;
\item[(f)] $\sum_lw_lq_lB_{il}B_{jl}=0$ for all $i,j$;
\item[(g)] the columns of $B$ have Hamming weight at least two and are pairwise non-parallel.
\end{enumerate}
Moreover $r+1\le n_B\le(p^r-1)/(p-1)-r$.
\end{lemma}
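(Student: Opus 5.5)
We reduce in four steps: first to $k=1$, then to a systematic form, then to the listed identities, and finally to the length bounds.

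\emph{Step 1: reduction to $k=1$.} Starting from an $[[n,k,d]]_p$ triorthogonal code with $k\ge2$, we keep one logical row $y^1$ and move the remaining logical rows $y^2,\dots,y^k$ out of the matrix. We do not move them into the stabilizers, since $\tau(y^j,y^j,y^j)\neq0$ would spoil triorthogonality. Instead we delete them, so the $X$-stabilizer space is unchanged and the new row space $\mathcal H'=\mathcal A+\F_py^1$ is contained in $\mathcal H$. The triples of Definition~\ref{def:tri} are a subset of the old ones, so the matrix stays triorthogonal. The $Z$-logicals are $\mathcal A^\perp\setminus\mathcal H'^\perp\subseteq\mathcal A^\perp\setminus\mathcal H^\perp$, so the distance does not drop. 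Using the multi-output description of Section~\ref{sec:multi}, we choose the logical rows dual to the punctured columns so that the deleted rows are genuinely logical. The coefficient classes are then normalized by the rescaling remark above.

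\emph{Step 2: systematic form.} Next we put the stabilizer matrix in reduced row echelon form and permute coordinates to obtain $A=(I_r\,|\,B)$. We then subtract from the logical row suitable multiples of stabilizer rows to clear its entries on the identity columns, giving $h=(0\,|\,q)$. Neither operation changes $\tau$-values of triples modulo stabilizers, the code, or the distance. Any coordinate on which every row vanishes can be deleted, which only shortens the code.

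\emph{Step 3: the identities.} Conditions (a)–(f) are Definition~\ref{def:tri} written out in this form. Since the identity columns contribute only to $\tau(a^i,a^i,a^i)$, we get (a). Also $\tau(a^i,a^i,a^j)$ receives no identity contribution for $i\neq j$, which gives (b); note that $(i,j)$ and $(j,i)$ are both required. Condition (c) is the triple of three distinct stabilizers. Conditions (d)–(f) are the triples involving $h$, where the identity columns contribute nothing because $h$ vanishes there.

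Condition (g) comes from $d\ge3$ via Lemma~\ref{lem:affinehull}, or directly. If a column of $B$ had weight $\le1$, say it equals $\beta e_i$, then the vector $e_{\text{col}}-\beta e_i$ is orthogonal to $\mathcal A$. It is a $Z$-logical of weight $\le2$ unless it is orthogonal to $h$; in that case we merge the two parallel coordinates, as in Section~\ref{sec:masses}, obtaining a shorter code with the same distance. Weight-zero columns of $B$ with $q_l\neq0$ give weight-one logicals, and those with $q_l=0$ can be deleted. Similarly, if two columns of $B$ are parallel, the corresponding weight-two vector is either a logical, contradicting $d\ge3$, or orthogonal to $\mathcal H$, in which case the two coordinates are merged into one carrying the combined charge. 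Repeating this terminates because the length decreases at each merge. The main obstacle is this merging step: we must check that merging keeps every coefficient nonzero (a coefficient sum of $0$ means the coordinate is simply deleted), that it preserves $d$, and that the classes can be renormalized to the chosen representatives afterwards.

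\emph{Step 4: the bounds.} The upper bound $n_B\le(p^r-1)/(p-1)-r$ counts projective points of $\mathbb P^{r-1}(\F_p)$, since the columns of $B$ are pairwise non-parallel, minus the $r$ coordinate points, which have weight one and are excluded by (g). For the lower bound we argue as follows. If $n_B\le r$, then the columns of $B$ together with $q$ give at most $r$ constraints, and we can find a vector supported on at most $r+1$ coordinates that is a $Z$-logical; the aim is to contradict $d\ge3$ or (d)–(f). More precisely, (d) requires $q\ne0$, and $d\ge2$ forces $q$ not to lie in the row space of $B$ restricted appropriately. I would show instead that the $(r+1)\times n_B$ matrix formed by $B$ and $q$ must have full rank $r+1$ to avoid a low-weight $Z$-logical. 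This gives $n_B\ge r+1$.
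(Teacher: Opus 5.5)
\paragraph{Review of the proposal.} Most of your argument follows the paper's route: delete the surplus logical rows, merge parallel columns, take a systematic basis, read off (a)--(f) as values of $\tau$ on the rows $x_i=(e_i\,|\,b_i)$ and $h$, read off (g) as projectivity of $A$, and count projective points for the upper bound. The detour through the multi-output construction in Step~1 is not needed. The merging step you flag as the main obstacle is settled in one line in the paper. Since $\tau=\sum_l c_l\,g_l^{\otimes3}$ over the columns $g_l$ of $H$, replacing $g_{l'}=a\,g_l$ by the single column $g_l$ with coefficient $c_l+a^3c_{l'}$ (or deleting it if this vanishes) preserves every value of $\tau$. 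A $Z$-logical of the merged code, padded with zeros, is a $Z$-logical of the old code of the same weight, so the distance does not drop; it need not stay equal, as you claim. Rows that become dependent are replaced by a basis, and $h$ stays outside their span because $\tau(h,h,h)=\lambda\neq0$ while $\tau$ vanishes on stabilizers.

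\paragraph{The gap: the lower bound $n_B\ge r+1$.} You propose to show that $\binom{q}{B}$ has rank $r+1$ ``to avoid a low-weight $Z$-logical''. The distance cannot give this. Take $r=2$, $B=(1,1)^{\mathsf T}$ and $q=(1)$. The only $Z$-logicals are the nonzero multiples of $(-1,-1,1)$, so $d=3$, and (d) and (g) hold, yet $n_B=1$. What rules such examples out is triorthogonality, specifically (a), (b), (e) together with $c_i\neq0$.

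The paper argues as follows. The vectors $c*x_i*x_i$ lie in $\mathcal H^\perp$, since $\beta(c*x_i*x_i,g)=\tau(x_i,x_i,g)=0$ for every row $g$. They are independent, since on the identity columns they restrict to $c_ie_i$. Because $\dim\mathcal H^\perp=n-r-1$, this gives $r\le n-r-1$, that is, $n_B\ge r+1$.

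Your rank formulation can be repaired the same way. Suppose $\alpha q+\gamma^{\mathsf T}B=0$ with $(\alpha,\gamma)\neq0$. Then $g=\alpha h+\sum_i\gamma_ix_i$ is supported on the identity columns, so $0=\tau(g,g,x_j)=c_j\gamma_j^2$ for every $j$. Hence $\gamma=0$, so $\alpha\neq0$ and $q=0$, which contradicts (d).
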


\begin{proof}
Deleting all logical rows but one preserves triorthogonality and does not lower the distance,
since the new set $A^\perp\setminus H'^\perp$ is contained in the old one. Each column $g_l$ of
$H$ contributes $c_lg_l^{\otimes3}$ to $\tau$. A zero column can be deleted, and two columns with
$g_{l'}=a\,g_l$ can be replaced by the single column $g_l$ with coefficient $c_l+a^3c_{l'}$, or deleted
if this vanishes. These operations preserve every value of $\tau$, and they do not lower the
distance: a $Z$-logical of the new code, padded with zeros and rescaled at the merged coordinate, is a
$Z$-logical of the old one of the same weight. If the stabilizer rows become dependent they are replaced
by a basis of their span; the logical row stays outside it, since $\tau(h,h,h)=\lambda\ne0$ while
$\tau$ vanishes on stabilizers. When $H$ has no zero or parallel columns, $d\ge3$ if and only if $A$
is projective: a zero column $l$ of $A$ gives the $Z$-logical $e_l$, and columns of $A$ with
$a_{l'}=a\,a_l$ give the $Z$-logical $a\,e_l-e_{l'}$. Choosing a systematic basis of $A$ and adding
stabilizers to $h$ gives the stated form, and conditions (a)--(f) are the values of $\tau$ on
triples of the rows $x_i=(e_i\,|\,b_i)$ and $h$; (g) is projectivity of $A$. For the bounds, the
vectors $c*x_i*x_i$, $1\le i\le r$, are independent and lie in $H^\perp$, which has dimension
$n-r-1$; and the columns of $A$ are distinct points of $\mathrm{PG}(r-1,p)$.
\end{proof}

For $p=7$, a column whose coefficient lies in the class $\pm k$, $k\in\{1,2,3\}$, can be replaced by $k$ parallel
copies with unit coefficients, and no fewer such columns suffice. Replacing a column by copies does not
lower the distance, so the shortest code with unit coefficients and $d\ge3$ has length equal to the minimum of
$\sum_i\operatorname{cost}(c_i)+\sum_l\operatorname{cost}(w_l)$ over solutions of (a)--(g), where $\operatorname{cost}(\pm k)=k$.

The solution set is invariant under permuting the columns of $\binom wq$ and
$B$ together, permuting the rows $(c_i,b_i)$, and rescaling $h$ and the coefficients, which we use to fix
$\lambda=1$; for $p=7$ the columns of $\binom qB$ and the rows $b_i$ may also be multiplied by cube
roots of unity. Taking the lexicographically smallest element of each orbit, every orbit contains a
solution whose columns $(w;q;B)$ and rows $(c_i,b_i)$ are lexicographically non-decreasing and, for
$p=7$, whose columns and rows have first nonzero entry in $\{1,3\}$. A backtracking search over the
rows $b_1,\dots,b_r$, checking each condition at the step where its last row is chosen, therefore
covers every orbit. For $p\ge5$ the conditions on a new row are linear, quadratic or cubic in it, and
at each node the program enumerates the smaller of two candidate sets: the vectors compatible with
the ordering, or the solutions of the linear conditions. For $d\ge D$ with $D\ge4$, the search is the
same and every leaf is checked by brute force for $Z$-logicals of weight $3,\dots,D-1$; $Z$-logicals of
weight one and two are excluded by (g).

We validated the search in three ways. For small values of $r$ and $n_B$ it finds the
same equivalence classes as a brute-force enumeration of all solutions of (a)--(f),
with unit and with general coefficients. Every code it finds was rebuilt, checked
against Definition~\ref{def:tri}, and its distance confirmed by brute force. The
distance check used for $D\ge4$ was tested on eight codes with $3\le d\le5$,
including $[[28,1,4]]_3$, $[[33,1,5]]_5$ and $[[20,1,4]]_7$: in every case it
accepts for $D\le d$ and rejects for $D=d+1$.

\paragraph{Weighted $P$-codes for $p\ge5$.} Beyond the arc bound we searched as
follows. Fix the directions $[v_l]$. As in Section~\ref{sec:arcs}, the moment
conditions say that $\theta$ is a linear relation among the $v_lv_l^{\mathsf T}$
and $\psi$ one among the $v_l$, with $\sigma_l=\theta_l/\psi_l$ and
$\mu_l=\psi_l^2/\theta_l$. By Lemma~\ref{lem:affinehull}, $d_P\ge d$ fails
exactly when $\sum_{l\in J}\alpha_l\xi_l=0$ with $\sum_{l\in J}\alpha_l=1$ for
some $J$ with $|J|\le d-1$, that is, when some dependency
$\sum_{l\in J}u_lv_l=0$ has
$\sum_{l\in J}u_l/\sigma_l=\sum_{l\in J}u_l\psi_l/\theta_l\neq0$. For fixed
$\theta$ the distance condition is therefore linear in $\psi$: one condition
$\sum_{l\in J}u_l\psi_l/\theta_l=0$ for each such dependency. We choose $\theta$
in the solution space of the quadrupole conditions, then $\psi$ in that of the
dipole and distance conditions, and keep solutions with all entries nonzero and
$N\neq0$.

\input{appendix_codes}

\end{document}

%% file: appendix_codes.tex
\section{Generator matrices}\label{app:matrices}
All matrices are over $\F_p$. For a triorthogonal code the first row, above the line, is the logical row; for a $P$-code the rows are the stabilizers, and the logical row is $\one$. Coefficients are $1$ unless stated. Spectral supports are as in Table~\ref{tab:pcodes}. For doubled codes, the columns are ordered as the first copy of the $P$-code, the second copy, then the seed, and the rows as in Theorem~\ref{thm:doubling}. Every distance is certified by an explicit minimum-weight $Z$-logical, together with a lower bound from exact enumeration ($P$-codes), exhaustive search (seeds and the shorter doubled codes) or Theorem~\ref{thm:doubling}. We print the matrices of the codes used in the examples; the others are given by their construction, and all matrices and $Z$-logicals are in the repository in machine-readable form.

\subsection*{$P$-codes}
\paragraph{$[[10,1,4]]_{3}^{P}$.} Shortest unweighted qutrit $P$-code with $d_P\ge3$. Printed in Section~\ref{sec:inputs}.

\paragraph{$[[7,1,3]]_{3}^{P}$.} Weighted qutrit $P$-code of Example~\ref{ex:15}. Charges $(2,\allowbreak 2,\allowbreak 1,\allowbreak 2,\allowbreak 1,\allowbreak 1,\allowbreak 2)$.
\begin{center}\tiny\setlength{\arraycolsep}{1.2pt}$\left(\begin{array}{ccccccc}
0&0&0&1&1&1&1\\
0&1&1&1&1&2&2\\
1&0&1&1&2&1&2\\
\end{array}\right)$
\end{center}

\paragraph{$[[11,1,5]]_{3}^{P}$.} Cyclic, $\Sigma=\{1,3,4,5,9\}$; the ternary Golay code. Matrix in the repository.

\paragraph{$[[13,1,4]]_{3}^{P}$.} Cyclic, $\Sigma=\{2,4,5,6,10,12\}$. Matrix in the repository.

\paragraph{$[[22,1,7]]_{3}^{P}$.} Cyclic, $\Sigma=\{1,2,3,5,6,8,9,10,15,18\}$. Matrix in the repository.

\paragraph{$[[23,1,8]]_{3}^{P}$.} Cyclic, $\Sigma=\{5,7,10,11,14,15,17,19,20,21,22\}$; a quadratic-residue code. Matrix in the repository.

\paragraph{$[[8,1,3]]_{5}^{P}$.} Cyclic, $\Sigma=\{1,2,5\}$. Matrix in the repository.

\paragraph{$[[5,1,3]]_{5}^{P}$.} Weighted; the code $S_5$ of Example~\ref{ex:5}, printed there, with charges $(1,\allowbreak 1,\allowbreak 4,\allowbreak 4,\allowbreak 1)$.

\paragraph{$[[11,1,5]]_{5}^{P}$.} Cyclic, $\Sigma=\{1,3,4,5,9\}$; a quadratic-residue code. Matrix in the repository.

\paragraph{$[[19,1,7]]_{5}^{P}$.} Cyclic, $\Sigma=\{1,4,5,6,7,9,11,16,17\}$. Matrix in the repository.

\paragraph{$[[7,1,4]]_{7}^{P}$.} Weighted; seven points of a conic (Example~\ref{ex:20}). Charges $(2,\allowbreak 2,\allowbreak 2,\allowbreak 4,\allowbreak 4,\allowbreak 1,\allowbreak 2)$.
\begin{center}\tiny\setlength{\arraycolsep}{1.2pt}$\left(\begin{array}{ccccccc}
0&0&2&1&1&1&4\\
0&4&0&1&2&3&6\\
2&0&0&1&3&2&3\\
\end{array}\right)$
\end{center}

\paragraph{$[[6,1,3]]_{7}^{P}$.} Cyclic, $\Sigma=\{4,5\}$. Matrix in the repository.

\paragraph{$[[9,1,4]]_{7}^{P}$.} Cyclic, $\Sigma=\{1,4,6,7\}$. Matrix in the repository.

\paragraph{$[[18,1,6]]_{7}^{P}$.} Cyclic, $\Sigma=\{2,5,8,11,12,14,15,17\}$. Matrix in the repository.

\subsection*{Seeds}
\paragraph{$[[8,1,2]]_{3}$.} The punctured Reed--Muller code~\cite{CampbellAnwarBrowne}.
\begin{center}\tiny\setlength{\arraycolsep}{1.2pt}$\left(\begin{array}{cccccccc}
1&1&1&1&1&1&1&1\\\hline
1&2&0&1&2&0&1&2\\
0&0&1&1&1&2&2&2\\
\end{array}\right)$
\end{center}

\paragraph{$[[19,1,3]]_{3}$.} The cyclic code of~\cite[Example~14]{PrakashCyclic}. Matrix in the repository.

\paragraph{$[[1,1,1]]_{3}$.} The trivial seed.
\begin{center}\tiny\setlength{\arraycolsep}{1.2pt}$\left(\begin{array}{c}
1\\\hline
\end{array}\right)$
\end{center}

\paragraph{$[[11,1,3]]_{5}$.} Cyclic construction B of~\cite{PrakashCyclic}: $n=12$, $\Sigma=\{2,3,9,10\}$, punctured at coordinate $0$. Matrix in the repository.

\paragraph{$[[6,1,2]]_{7}$.} Cyclic construction A of~\cite{PrakashCyclic}, $n=6$, $\Sigma=\{1\}$; a Reed--Solomon code.
\begin{center}\tiny\setlength{\arraycolsep}{1.2pt}$\left(\begin{array}{cccccc}
1&1&1&1&1&1\\\hline
3&2&6&4&5&1\\
\end{array}\right)$
\end{center}

\subsection*{Doubled codes}
\paragraph{$[[15,1,3]]_{3}$.} Theorem~\ref{thm:doubling} with seed $[[1,1,1]]_{3}$ and $P$-code $[[7,1,3]]_{3}^{P}$, $\kappa=1$, $\eta=1$. The columns with coefficient $2\equiv-1$ are then negated, which makes every coefficient $1$. Printed in Example~\ref{ex:15}.

\paragraph{$[[28,1,4]]_{3}$.} Theorem~\ref{thm:doubling} with seed $[[8,1,2]]_{3}$ and $P$-code $[[10,1,4]]_{3}^{P}$, $\kappa=1$, $\eta=1$.
\begin{center}\tiny\setlength{\arraycolsep}{1.2pt}$\left(\begin{array}{cccccccccccccccccccccccccccc}
2&2&2&2&2&2&2&2&2&2&1&1&1&1&1&1&1&1&1&1&1&1&1&1&1&1&1&1\\\hline
1&0&0&0&2&1&0&1&2&2&2&0&0&0&1&2&0&2&1&1&0&0&0&0&0&0&0&0\\
0&1&0&0&2&2&1&0&1&2&0&2&0&0&1&1&2&0&2&1&0&0&0&0&0&0&0&0\\
0&0&1&0&2&2&2&1&0&1&0&0&2&0&1&1&1&2&0&2&0&0&0&0&0&0&0&0\\
0&0&0&1&2&1&2&2&1&0&0&0&0&2&1&2&1&1&2&0&0&0&0&0&0&0&0&0\\
0&0&0&0&0&0&0&0&0&0&1&1&1&1&1&1&1&1&1&1&1&1&1&1&1&1&1&1\\
0&0&0&0&0&0&0&0&0&0&0&0&0&0&0&0&0&0&0&0&1&2&0&1&2&0&1&2\\
0&0&0&0&0&0&0&0&0&0&0&0&0&0&0&0&0&0&0&0&0&0&1&1&1&2&2&2\\
\end{array}\right)$
\end{center}

\paragraph{$[[37,1,5]]_{3}$.} Theorem~\ref{thm:doubling} with seed $[[15,1,3]]_{3}$ and $P$-code $[[11,1,5]]_{3}^{P}$, $\kappa=1$, $\eta=1$. Matrix in the repository.

\paragraph{$[[72,1,6]]_{3}$.} Theorem~\ref{thm:doubling} with seed $[[28,1,4]]_{3}$ and $P$-code $[[22,1,7]]_{3}^{P}$, $\kappa=1$, $\eta=1$. Matrix in the repository.

\paragraph{$[[118,1,8]]_{3}$.} Theorem~\ref{thm:doubling} with seed $[[72,1,6]]_{3}$ and $P$-code $[[23,1,8]]_{3}^{P}$, $\kappa=2$, $\eta=1$. Matrix in the repository.

\paragraph{$[[33,1,5]]_{5}$.} Theorem~\ref{thm:doubling} with seed $[[11,1,3]]_{5}$ and $P$-code $[[11,1,5]]_{5}^{P}$, $\kappa=1$, $\eta=1$. Matrix in the repository.

\paragraph{$[[20,1,4]]_{7}$.} Theorem~\ref{thm:doubling} with seed $[[6,1,2]]_{7}$ and $P$-code $[[7,1,4]]_{7}^{P}$, $\kappa=3$, $\eta=2$. Coefficients $(4,\allowbreak 4,\allowbreak 4,\allowbreak 1,\allowbreak 1,\allowbreak 2,\allowbreak 4,\allowbreak 4,\allowbreak 4,\allowbreak 4,\allowbreak 1,\allowbreak 1,\allowbreak 2,\allowbreak 4,\allowbreak 1,\allowbreak 1,\allowbreak 1,\allowbreak 1,\allowbreak 1,\allowbreak 1)$.
\begin{center}\tiny\setlength{\arraycolsep}{1.2pt}$\left(\begin{array}{cccccccccccccccccccc}
4&4&4&4&4&4&4&3&3&3&3&3&3&3&1&1&1&1&1&1\\\hline
0&0&2&1&1&1&4&0&0&5&6&6&6&3&0&0&0&0&0&0\\
0&4&0&1&2&3&6&0&3&0&6&5&4&1&0&0&0&0&0&0\\
2&0&0&1&3&2&3&5&0&0&6&4&5&4&0&0&0&0&0&0\\
0&0&0&0&0&0&0&3&3&3&3&3&3&3&1&1&1&1&1&1\\
0&0&0&0&0&0&0&0&0&0&0&0&0&0&3&2&6&4&5&1\\
\end{array}\right)$
\end{center}